\begin{document}  
%  \linenumbers
  
\newcounter{algo}  
\setcounter{algo}{0}  
\newenvironment{algo}[1]{%  
\refstepcounter{algo}  
%\begin{figure}[htbp]  
\begin{center}\sl Algorithm \thealgo: #1  
\begin{boxedminipage}{0.95\textwidth}  
}{\end{boxedminipage}\end{center}}%\end{figure}}  

\newcommand{\mC}{\mathcal{C}}  
\newcommand{\mF}{\mathcal{F}}  

\newenvironment{preuve}{\noindent {\it Proof.}}{$\Box$\vskip1ex}  
\newenvironment{preuveA1}{\noindent {\it Proof of conservation of   
Property $A_1$.}}{$\Box$\vskip1ex}  
\newenvironment{preuveA2}{\noindent {\it Proof of conservation of   
Property $A_2$.}}{$\Box$\vskip1ex}  
\newenvironment{preuveB}{\noindent {\it Proof of the invariants.}}  
{$\Box$\vskip1ex}  
\newenvironment{OJO}{\noindent {\bf OJO:}}{$\Box$\vskip1ex}  
\newcommand{\currfingname}{\mathit{curr\mathunderscore fing\mathunderscore name}}  
\newcommand{\lastname}{\mathit{last\mathunderscore name}}    
\newcommand{\lastchar}{\mathit{last\mathunderscore char}}    
\newcommand{\tmpname}{\mathit{tmp\mathunderscore name}}
\newcommand{\namesubtable}{\mathit{name\mathunderscore sub\mathunderscore table}}
\newcommand{\finglink}{\mathit{fing\mathunderscore link}}   
\newcommand{\fingpar}{\mathit{fing\mathunderscore par}}
\newcommand{\fingchar}{\mathit{fing\mathunderscore char}}
\newcommand{\tmpsubname}{\mathit{tmp\mathunderscore sub\mathunderscore name}}
\newcommand{\subnamepair}{\mathit{sub\mathunderscore name\mathunderscore pair}}
\newcommand{\nameuptable}{\mathit{name\mathunderscore up\mathunderscore table}}
\newcommand{\subname}{\mathit{sub\mathunderscore name}}
\newcommand{\upname}{\mathit{up\mathunderscore name}}
\newcommand{\parentfingname}{\mathit{parent\mathunderscore fing\mathunderscore name}}
\newcommand{\fingname}{\mathit{fing\mathunderscore name}}
\newcommand{\oldname}{\mathit{old\mathunderscore name}}
\newcommand{\currname}{\mathit{curr\mathunderscore name}}
\newcommand{\childlist}{\mathit{child\mathunderscore list}}
\newcommand{\maxlocstart}{\mathit{max\mathunderscore loc\mathunderscore start}}
\newcommand{\maxlocend}{\mathit{max\mathunderscore loc\mathunderscore end}}
\newcommand{\maxloclist}{\mathit{max\mathunderscore loc\mathunderscore list}}
\newcommand{\charlabel}{\mathit{char\mathunderscore label}}
\newcommand{\occ}{\mathit{occ}}

%%%%%%%%%%%%%%%%%%%%%%%%%%%%%%%%%%%%%%%%%%%%%%%%%%%%%%%%5  
\title{\vspace*{-2cm}Various improvements to text fingerprinting\thanks{This work is supported by the Russian
     Foundation for Fundamental Research (Grant 05-01-00994) and the
     program of the President of the Russian Federation for supporting
     of young researchers (Grant MD-3635.2005.1)} \thanks{This work is also supported by the french ANR project MAPPI.}}   
  
 \author{Djamal Belazzougui\inst{1} \quad Roman
   Kolpakov\inst{2} \quad Mathieu
   Raffinot\inst{1}} \institute{LIAFA, Univ. Paris Diderot - Paris 7,
   75205 Paris Cedex 13, France,\\ {\tt \{dbelaz,
     raffinot\}@liafa.jussieu.fr} \and Liapunov French-Russian
   Institute, Lomonosov Moscow State University, Moscow, Russia, {\tt
     foroman@mail.ru} }

\maketitle

%%%%%%%%%%%%%%%%%%%%%%%%%%%%%%%%%%%%%%%%%%%%%%%%%%%%%%%%5  
%\begin{abstract}
\noindent
{\small {\bf Abstract:} 
%\end{abstract}  
Let $s = s_1 .. s_n$ be a text (or sequence) on a finite alphabet
$\Sigma$ of size $\sigma$. A fingerprint in~$s$ is the set of distinct characters
appearing in one of its substrings. The problem considered here is
to compute the set ${\cal F}$ of all fingerprints of all substrings of~$s$
in order to answer efficiently certain questions on this set.  
A substring $s_i .. s_j$ is a maximal location for a fingerprint $f\in F$ 
(denoted by $\langle i,j \rangle$) if the alphabet of $s_i .. s_j$ is~$f$ 
and $s_{i-1}$, $s_{j+1}$, if defined, are not in~$f$. The set of
maximal locations in~$s$ is ${\cal L}$ (it is easy to see that $|{\cal L}| \leq n \sigma$). 
Two maximal locations $\langle i,j \rangle$ and $\langle k,l \rangle$ such
that $s_i .. s_j = s_k .. s_l$ are named {\em copies}, and the quotient set 
of ${\cal L}$ according to the copy relation is denoted by ${\cal L}_C$.  

We first present new exact efficient algorithms and data structures for the following three
problems: (1) to compute ${\cal F}$; (2) given $f$ as a set of distinct
characters in $\Sigma$, to answer if $f$ represents a fingerprint in
${\cal F}$; (3) given $f$, to find all maximal locations of $f$ in~$s$. 
As well as in papers concerning succinct data structures, in the paper 
all space complexities are counted in bits.
Problem 1 is solved either in $O(n+|{\cal L}_C|\log \sigma)$ worst-case time 
(in this paper all logarithms are intended as base two logarithms) 
using $O((n+|{\cal L}_C|+|{\cal F}|\log \sigma)\log n)$ bits 
of space, or in $O(n+|{\cal L}|\log \sigma)$ randomized expected time 
using $O((n+|{\cal F}|\log \sigma)\log n)$ bits of space. 
Problem 2 is solved either in $O(|f|)$ expected time 
if only $O(|f|\log n)$ bits of working space for queries is allowed, or
in worst-case $O(|f|/\epsilon)$ time if a working
space of $O(\sigma^{\epsilon}\log n)$ bits is allowed (with $\epsilon$ a constant satisfying $0<\epsilon<1$). 
These algorithms use a data structure that occupies $|{\cal F}|(2\log\sigma+\log_2 e)(1+o(1))$ bits. 
Problem 3 is solved with the same time complexity as Problem 2, but with the addition of an $\occ$ term
to each of the complexities, where $\occ$ is the number of maximal locations
corresponding to the given fingerprint. Our solution of this last problem 
requires a data structure that occupies $O((n+|{\cal L}_C|)\log n)$ bits of memory. 

In the second part of our paper we present a novel Monte Carlo approximate construction
approach. Problem 1 is thus solved in $O(n+|{\cal L}|)$ expected time using $O(|{\cal
F}|\log n)$ bits of space but the algorithm is incorrect
with an extremely small probability that can be bounded in advance.

%Problem 1 is solved either in $O(n+|{\cal L}_C|\log |\Sigma|)$
%worst case time, or in $O(|L|\log |\Sigma|)$ using
%$O(|F|\log |\Sigma|)$ space, or in $O(|L|)$ expected time using
%$O(|F|)$ space. Problem 2 is solved either in $O(|f|)$ expected time
%with high probability if only $O(|f|)$ working space for queries is
%allowed, or in worst case $|f|\epsilon$ for any constant $\epsilon$ if
%a working space of $|\Sigma|^{1/\epsilon}$ is allowed. Problem 3 is
%solved in the same time than Problem 2 adding a $K$ term to each
%complexities, where $K$ is the number of maximal locations
%corresponding to the fingerprint searched for.

%%%%%%%%%%%%%%%%%%%%%%%%%%%%%%%%%%%%%%%%%%%%%%%%%%%%%%%%5  
\section{Introduction}  
\label{sec:intro}

We consider a finite ordered alphabet $\Sigma$ with $\sigma =|
\Sigma|$ and $s=s_1..s_n$ a sequence of $n$ letters, $s_i \in
\Sigma$. The set of all sequences over $\Sigma$ is denoted
$\Sigma^*$. The rank of each letter $\alpha$ in $\Sigma$ is given by
$f_{\Sigma}(\alpha)$ that ranges between $0$ and $\sigma-1$.  A
sequence $v \in \Sigma^* $ is a factor or substring of $s$ if
$s=uvw$. The fingerprint $C(s)$ of a sequence $s$ is the set of
distinct letters in $s$. By extension, $C_s(i,j)$ is the set of
distinct letters in $s_i..s_j.$

%% \begin{definition}
%% Let ${\cal C}$ be a set of letters of $\Sigma$. A maximal location of
%% ${\cal C}$ in $s=s_1..s_n$ is an interval $[i,j]$, $1\leq i \leq
%% j \leq n$,  such that
%% \vspace{-2mm}
%% \begin{enumerate}
%% \item $C_s(i,j) = {\cal C}$
%% \item if $i>1$, $s_{i-1} \not \in C_s(i,j)$
%% \item if $j<n$, $s_{j+1} \not \in C_s(i,j)$
%% \end{enumerate}
%% This maximal location is denoted $\langle i,j \rangle.$   
%% \end{definition}

\begin{definition}
Let ${\cal C}$ be a set of letters of $\Sigma$. A maximal location of
${\cal C}$ in $s=s_1 .. s_n$ is an interval $[i,j]$, $1\leq i \leq
j \leq n$,  such that\\
$\begin{array}{lll}
(1)~~ C_s(i,j) = {\cal C}; &  ~~ (2) ~~ \mbox{if } i>1, s_{i-1} \not \in
  C_s(i,j); &  ~~ (3) ~~ \mbox{if } j<n, s_{j+1} \not \in C_s(i,j)\\
\end{array}$\\
This maximal location is denoted $\langle i,j \rangle.$
\vspace{-2mm} 
\end{definition}

\noindent
We denote by ${\cal F}$ the set of distinct fingerprints and by ${\cal
  L}$ the set of maximal locations of all fingerprints of ${\cal F}$.

\begin{definition}
Two maximal locations $ \langle i,j \rangle$ and $ \langle k,l
\rangle$ of $s=s_1..s_n$ are {\em copies} if $s_i..s_j$ $=$
$s_k..s_l.$
\end{definition}

\noindent
The ``copy'' relation is obviously an equivalence relation over ${\cal
  L}$, and we denote ${\cal L}_{C}$ the set of equivalence classes.
In this paper, given a sequence $s$, we are interested in three main
problems:

\begin{enumerate}
\item Compute the set ${\cal F}$ of all fingerprints in $s$;
\item Given a fingerprint $f$, find whether $f$ is a
  fingerprint in ${\cal F}$;
\item Given a fingerprint $f$, find all the maximal locations
of $f$ in $s$.
\end{enumerate}

%A maximal location $[i,j]$ such that $1\leq i \leq j < n$ is called a
%{\em left} maximal location. 

Efficient answers to these questions have many applications in
information retrieval, computational biology and natural language
processing \cite{AmirALS03}. The input alphabet $\Sigma$ is considered
to be the alphabet of the input sequence, thus $\sigma\leq
n$. Notice that $|{\cal L}| \leq n \sigma$. The best current
algorithms solve Problem 1 in $\Theta(\min \{ n + |{\cal L}| \log
\sigma, n^2 \})$ time and space. The bound $\Theta(n + |{\cal L}|
\log\sigma)$ is that of \cite{KR08b}.  The $\Theta(n^2)$ bound is
obtained using the algorithm of Didier et al.~\cite{DSST04}. Problem 2 is
solved in $O(|f|\log (\sigma/|f|))$ time and $O(|{\cal F}|)$ space 
($O(|{\cal F}|\log n)$ bits) and Problem 3 in 
$O(|f|\log (\sigma/|f|) + occ)$ time (where $occ$ is the number of maximal locations
that match the given fingerprint) and $O(|{\cal F}|+ |{\cal L}|)$ space 
($O((|{\cal F}|+|{\cal L}|)\log n)$ bits) in \cite{CYHW07,CYHW11}.

We first present new exact efficient algorithms and data structures for the three
problems we considered above. 

Problem 1 is solved either in $O(n+|{\cal L}_C|\log \sigma)$ worst-case time using 
$O((n+|{\cal L_C}|+|{\cal F}|\log \sigma)\log n)$ bits of space, or in 
$O(n+|{\cal L}|\log\sigma)$ randomized expected time using $O((n+|{\cal F}|\log \sigma)\log n)$ bits of space.

Problem 2 is solved either in $O(|f|)$ expected time and space if only $O(|f|\log n)$ bits of working space for queries is allowed, 
or in $O(|f|/\epsilon)$ worst-case time  if a working space of $O(\sigma^{\epsilon}\log n)$ bits is allowed. 
This problem uses a data structure which occupies $|{\cal F}|(2\log\sigma+\log_2 e)(1+o(1))$ bits. 
Previous and new exact results are summarized in table~\ref{table:problem2}.

Problem 3 is solved in the same time as Problem 2, with the addition of an $\occ$ term 
to each of the complexities, where $\occ$ is the number of maximal locations
corresponding to the fingerprint searched. Previous and new exact results are summarized in tables 1-3.

\begin{table}
\noindent
\begin{tabular}{|l|l|l|}
  \hline
  Solution & Build space (bits)& Build time \\
  %\hline
  %$O(n\log n+{\cal L}_C\log\sigma)$ bits & $O({\cal L}_C)$ & $O(n\log n+{\cal F}w+{\cal L}_C\log\sigma)$ bits & $O(|f_i|)$ w.h.p\\
 \hline 
  prev. \cite{KR08b} (worst-case) & $O((n+|{\cal L}|)\log n)$  & $O(n+|{\cal L}|\log\sigma)$ \\
 \hline
  theorems~\ref{theo:build_part_tree},\ref{theo:build_names} (worst-case)& 
  $O((n+|{\cal L}_C|+|{\cal F}|\log\sigma)\log n)$& $O(n+|{\cal L_C}|\log\sigma)$\\
  \hline
  theorem~\ref{theo:rand_build} (randomized expected)& 
$O((n+|{\cal F}|\log\sigma)\log n)$ & $O(n+|{\cal L}|\log\sigma))$ \\
  \hline
  theorem~\ref{theo:MC} (Monte-Carlo)& $O((n+|{\cal F}|)\log n)$ & $O(n+|{\cal L}|)$\\
%  \hline
%  theorem~\ref{theo:MC} (Monte-Carlo) & $O((n+|{\cal L}|)\log n)$ & $O(n+|{\cal L}|)$\\

\hline 
\end{tabular}%\\[3mm]
\caption{Previous and new solutions to Problem 1 (Determination of ${\cal F}$).}
\label{table:problem1}
\end{table}

\begin{table}
\noindent
\centering
\begin{tabular}{|l|l|l|}
  \hline
  Solution & Data structure space (bits)& Query time \\
  %\hline
  %$O(n\log n+{\cal L}_C\log\sigma)$ bits & $O({\cal L}_C)$ & $O(n\log n+{\cal F}w+{\cal L}_C\log\sigma)$ bits & $O(|f_i|)$ w.h.p\\
 \hline 
  prev. & $O(|{\cal F}|\log n)$ & $O(|f|\log (\sigma/|f|))$\\
  \hline
  theorem~\ref{theo:succ_exist_DS}& $O(|{\cal F}|\log\sigma)$ & $O(|f|)$\\
\hline 

\end{tabular}%\\[3mm]
\caption{Previous and new solution for Problem 2 (existential fingerprint queries).}
\label{table:problem2}
\end{table}

\begin{table}

\noindent
\centering
\begin{tabular}{|l|l|l|}
 \hline
  Solution & Data structure space (bits)& Query time \\

\hline 
  prev. & $O(|{\cal L}| \log n)$ & $O(|f|\log (\sigma/|f|))+\occ)$\\
\hline
  theorem~\ref{theo:fast_rep_DS}& $O(|{\cal L}| \log n)$ & $O(|f|+\occ)$\\
\hline
  theorem~\ref{theo:fast_rep_DS}& $O((n+|{\cal L}_C|) \log n)$ & $O(|f|+\occ)$\\

\hline 
\end{tabular}%\\[3mm]
\caption{Previous and new solutions to Problem 3 (maximal location report queries)}
\label{table:problem3}
\end{table}

\vspace{-0.5cm}
\noindent
In this article we also propose a novel Monte Carlo approximate query
approach. The result of the query may not be exact, but an error occurs at a
probability that one can fix {\em a priori} as small as
required. This approach has the advantage of speeding up the
identification of all fingerprints by a $\log \sigma$ factor. Problem
1 is thus solved in $O(n+|{\cal L}|)$ expected time using $O(|{\cal F}|\log n)$ 
bits of space using a Monte Carlo approach, but the algorithm yields
incorrect results with an extremely low probability. Table
\ref{table:problem3} summarizes the complexities of the construction
space and time including the Monte-Carlo method.

Our algorithms are based on several tools of four main natures:
hash functions, succinct data structures, trees, and naming techniques first introduced in \cite{KMR72}, 
adapted to the fingerprint problem in \cite{AmirALS03} and then
successively improved in \cite{DSST04} and in \cite{KR08b}. These
tools are presented in Section \ref{tools}. In Section \ref{redundant}
we present our $O(n+|{\cal L}_C|\log \sigma)$ worst-case time construction algorithm. 
Section \ref{space}
presents a more space efficient representation of ${\cal F}$ in space
$O(|{\cal F}|\log \sigma)$ bits instead of $O(|{\cal F}|\log n)$ bits. 
This data structure allows us to solve Problem 2 and 3
in the complexities bounds announced above.
Then Section \ref{less} contains the $O(n+|{\cal L}|\log\sigma)$ expected
time algorithm using $O((n+|{\cal F}|\log\sigma)\log n)$-bit space 
for solving Problem 1. Finally, in
Section \ref{random} we present the Monte Carlo algorithm that allows
us to efficiently solve Problem 1 in time $O(|{\cal L}|)$ and space
$O(|{\cal F}|\log n)$ thus saving a $\log\sigma$ factor in both 
space and time complexity of the algorithm in section \ref{less}.

We assume below without loss of generality that the input sequence
does not contain two consecutive repeating characters. Such a sequence
is named {\em simple}. The segments of repeating characters, say
$\alpha$, of any input sequence can be reduced to a unique occurrence
of $\alpha$. The two sequences have the same set ${\cal F}$ and the
same sets ${\cal L}$ and ${\cal L}_C$, up to small changes in the
bounds (these changes can be simply retrieved in $\Theta(1)$ time
per maximal location and produced by trivial algorithm in $\Theta(n)$ time). 
This technical trick greatly simplifies the algorithms we present by
removing many straightforward technical cases.

All the algorithms presented in this paper assume the unit-cost word RAM model with word length $w=\Omega(\log n)$ 
and with usual arithmetic and logic operations taking constant time (additions, multiplication, bitwise operations etc.).

\section{Tools}
\label{tools}

This section is devoted to the four main tools we use in our
algorithms, namely polynomial hash functions, the suffix tree, the
participation tree and the naming technique.

\subsection{Hash functions}
\label{subsec:poly_hash}
Our constructions are based on the use of polynomial hash functions
modulo $P$, where $P$ is a suitably chosen prime. Given a collection
$M$ of $m$ sets over a universe $\sigma$, our goal is to find a
polynomial hash function so that each set is mapped to a distinct
value. The polynomials are evaluated modulo an arbitrary prime $P$ chosen such
that $m^2\sigma \leq P \leq 2m^2\sigma$ (we will show later how 
to efficiently find such a prime). More precisely, we will use a family of hash
functions $H_P=\{h_X|X\in [1,P-1]\}$, where each hash function $h_X\in
H_P$ in the family is parametrized with an integer $X\in [1,P-1]$. The
functions of the family are defined in the following way : for any set
$S$ of $t$ distinct integers $S=\{e_1,e_2,\ldots,e_t\}$ such that
$S\subseteq[0,\sigma-1]$ we have: $$h_X(S)=\sum_{i=1}^{t}X^{e_i}\bmod
P$$ In order to compute a fixed hash function $h_X$ on any set $S$ in
$O(|S|)$ time, we can use a precomputed table of size $\sigma$, which
stores all the powers of $X$ up to $X^{\sigma-1}$. Alternatively, we
could use a two-dimensional precomputed table $T$ of size
$c\cdot \lceil\sigma^{1/c}\rceil$ for any integer $c$ ensuring a computation
time of $O(c|S|)$. That is, we store in $T[i,j]$ the number
$X^{i\gamma^j}$ where $\gamma=\lceil\sigma^{1/c}\rceil$. Then in order
to compute $X^{e_i}$, we can use the property that $e_i$ can be
decomposed into a sum of $c$ numbers
: $$e_i=\sum_{j=0}^{c-1}d_{ij}\gamma^j$$ where each $d_{ij}$ can be
computed using the formula: $$d_{ij}=\lfloor
e_i/\gamma^j\rfloor\bmod\gamma$$ Thus for computing $X^{e_i}$, it
suffices to use the
formula: $$X^{e_i}=\prod_{j=0}^{c-1}X^{d_{ij}\gamma^j}=\prod_{j=0}^{c-1}T[d_{ij},j]$$
To summarize, given any set $S=\{e_1,e_2,\ldots,e_t\}$ where
$S\subseteq[0,\sigma-1]$, $h_X(S)$ can be computed in $O(c\cdot t)$
time. First, for each $e_i$, compute $X^{e_i}$ in $O(c)$ time: for
each $e_i$ compute its decomposition $\sum_{0\leq j<c}d_{ij}\gamma^j$
in $O(c)$ time where each $d_{ij}$ is computed by $d_{ij}=\lfloor
e_i/\gamma^j\rfloor\bmod\gamma$, and then compute $X^{e_i}$ also in
$O(c)$ time using the formula $X^{e_i}=\prod_{0\leq j< c}T[d_{ij},j]$. Thus, the computations of all $X^{e_i}$ take $O(c\cdot t)$
time in total. The final step is to sum all of the computed $X^{e_i}$
which takes time $O(t)$.  \\Summarizing, for any set $S$ of $t$
elements the computation of $h_X(S)$ takes $O(c\cdot t)$. The space needed
by the precomputed table $T$ is $O(c\cdot \sigma^{1/c})$.  \\ In the
following we will need the technical lemma below:
\begin{lemma}
\label{lemma:poly_hash_lemma}
Given a collection $M$ of $m$ integer sets where each set is a subset of $[0,\sigma-1]$, a randomly chosen hash function $h_X \in H_P$ 
for $P\geq m^2\sigma$ will injectively map the collection $M$ to the interval $[0,P-1]$ with probability at least $1/2$. 
\end{lemma}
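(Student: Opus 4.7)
The plan is to reduce the lemma to a standard argument about the number of roots of a nonzero polynomial over the finite field $\mathbb{Z}/P\mathbb{Z}$, combined with a union bound over all pairs of sets in $M$.

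First, I would fix two distinct sets $S, S' \in M$ and analyze the probability of a collision. Because $h_X(S) - h_X(S') = \sum_{e \in S \setminus S'} X^e - \sum_{e \in S' \setminus S} X^e$, this is a polynomial in $X$ of degree at most $\sigma - 1$, all of whose coefficients belong to $\{-1, 0, +1\}$. Since $S \neq S'$, the symmetric difference $S \triangle S'$ is nonempty, so at least one coefficient is nonzero, and because $P > \sigma$ this polynomial is not the zero polynomial modulo~$P$. Since $P$ is prime, $\mathbb{Z}/P\mathbb{Z}$ is a field, and a nonzero polynomial of degree at most $\sigma-1$ has at most $\sigma-1$ roots in the field. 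Hence, over a random $X$ chosen uniformly in $[1, P-1]$, the probability that $h_X(S) = h_X(S')$ is at most $(\sigma-1)/(P-1) < \sigma/P$.

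Next, I would apply a union bound over all $\binom{m}{2} < m^2/2$ unordered pairs of distinct sets in $M$. The probability that $h_X$ is not injective on $M$ is thus bounded above by
\[
  \binom{m}{2} \cdot \frac{\sigma}{P} \;<\; \frac{m^2 \sigma}{2P} \;\leq\; \frac{1}{2},
\]
where the last inequality uses the hypothesis $P \geq m^2 \sigma$. Consequently, $h_X$ is injective on $M$ with probability at least $1/2$, which is what we wanted to show.

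The only mildly delicate point is the justification that the polynomial $h_X(S) - h_X(S')$ is not identically zero modulo $P$; once that is settled, the Schwartz--Zippel-style counting and the union bound are routine. I would be careful to state that the nonzero coefficients remain nonzero modulo $P$ because they are $\pm 1$ and $P \geq m^2 \sigma \geq 2 > 1$, so no coefficient vanishes by reduction mod $P$.
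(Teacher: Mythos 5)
Your proof is correct and follows essentially the same route as the paper's: bound the collision probability for each pair by counting roots of the difference polynomial of degree at most $\sigma-1$ over $GF[P]$, then apply a union bound over the $m(m-1)/2$ pairs and use $P \geq m^2\sigma$. Your explicit remark that the difference polynomial is nonzero modulo $P$ because its coefficients are $\pm 1$ is a small point the paper leaves implicit, but it does not change the argument.
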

\begin{proof}
The lemma is easy to prove. Take any pair of sets $(x,y)\in M^2$. The two sets $x$ and $y$ are mapped to the same hash value 
by a function $h_X\in H_P$ if and only if $(h_X(x)-h_X(y))=0$. Now $h_X(x)-h_X(y)$ is a polynomial of degree at most $\sigma-1$ 
over the field $GF[P]$ which consequently can have at most $\sigma-1$ roots. 
Therefore for any pair $(x,y)\in M^2$ we have that $(h_X(x)-h_X(y))$ can possibly be zero for at most $\sigma-1$ different values of $X$. 
As we have $m(m-1)/2$ such pairs, the number of values of $X$ for which we have a collision for any of the pairs is at most $t=(\sigma -1)m(m-1)/2$. 
We have $P=\sigma m^2$ and therefore $t\leq P/2$. 
\end{proof}
We now sketch how to efficiently find one prime number in the interval $[m^2\sigma,2m^2\sigma]$. By well known properties of 
the distribution of prime numbers, we know that the density of primes below a given number $N$  is roughly logarithmic in $N$. 
This suggests the following simple algorithm: randomly pick a number $P$ in the interval $[m^2\sigma,2m^2\sigma]$. 
The number $P$ will be prime with probability $\Omega(1/\log (m^2\sigma))=\Omega(1/(\log m +\log\sigma))$. 
Then test whether $P$ is a prime using any efficient deterministic primality testing algorithm that takes time polylogarithmic in $P$. 
If $P$ is not a prime, then repeat the same procedure (pick a random $P$ in the interval and test its primality) until we get a prime $P$. 
Because the probability of $P$ being prime is $\Omega(1/(\log m +\log\sigma))$, the expected number of repeated procedures will be $O(\log m+\log\sigma)$. 
As a primality testing takes time polylogarithmic in $(m^2\log\sigma)$ and we are doing $O(\log m+\log\sigma)$ expected primality tests, 
we deduce that the total time for finding $P$ is $O((\log m+\log\sigma)^c)$ for some constant~$c$. 

\subsection{Succinct Data Structures}

\subsubsection{Succinct Static Function Representation}
We will make use of the following recent result:
\begin{lemma}~\cite{P09}
\label{lemma:succ_fn_rep}
Given a set $S\subseteq U$ where $|U|\leq 2^w$,$|S|\geq \log|U|$ and a function $f$ from $S$ into $[0,2^k-1]$ (with $k\leq w$), we can, in $O(|S|)$ time build a succinct representation of the function $f$ that uses $|S|k(1+o(1))$ bits.
Given any element $x\in S$ the representation returns $f(x)$ in constant time. Given an element $x\in U\backslash S$, the representation returns an arbitrary value in $[0,2^k-1]$ in constant time. 
\end{lemma}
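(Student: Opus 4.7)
The plan is to prove this by a retrieval (Bloomier-filter) construction that reduces the problem to solving a sparse linear system over $GF(2^k)$. First I would fix a small constant $r\geq 3$ and random hash functions $h_1,\ldots,h_r:U\to[0,t-1]$ with $t=(1+\epsilon)|S|$, and look for a table $T[0\,..\,t-1]$ of $k$-bit cells satisfying
$$T[h_1(x)]\oplus T[h_2(x)]\oplus\cdots\oplus T[h_r(x)]=f(x)\quad\text{for every } x\in S.$$
A query then consists of $r=O(1)$ table lookups and $r-1$ XORs, which runs in constant time. For $x\in U\setminus S$ the same formula returns some value in $[0,2^k-1]$, which the lemma explicitly allows.

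Next I would show $T$ exists and can be computed in linear time. Viewing the tuples $\{h_1(x),\ldots,h_r(x)\}$ as hyperedges of a random $r$-uniform hypergraph on $t$ vertices, standard results (Molloy; Dietzfelbinger et al.) give that if $t/|S|$ exceeds the $r$-th peelability threshold (already below $1.23$ for $r=3$), then with high probability the hypergraph is peelable: one can iteratively remove a vertex of degree $1$ together with its unique hyperedge until no edges remain. Reversing the peeling order yields $T$ in $O(|S|)$ time, since when processing the edge $e$ that last touched a freshly-exposed vertex $v$, all other entries of $T$ indexed by $e$ have already been fixed, so $T[v]$ is forced to a unique value. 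If peeling fails I would redraw the hash seeds; the expected number of retries is $O(1)$.

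To push the space from $(1+\epsilon)|S|k$ down to $|S|k(1+o(1))$, I would partition $U$ into $\Theta(|S|/m)$ buckets by a universal hash function targeting bucket size $m=\Theta(\log^3|U|)$, and apply the scheme above independently inside each bucket with a tiny bucket-specific slack $\epsilon_b=o(1)$. Because each bucket is polylogarithmic in size we can afford to brute-force search per bucket for a hash seed that succeeds, at total build cost $O(|S|)$, driving the per-bucket failure probability well below what a union bound needs. A top-level directory storing one $O(\log m)$-bit seed and one pointer per bucket costs $O((|S|/m)\log|U|)$ bits, which is $o(|S|k)$ using $|S|\geq\log|U|$ and $m$ polylog, so the two contributions sum to $|S|k(1+o(1))$.

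The main obstacle is exactly this last chunking step: one must simultaneously balance bucket sizes, seed lengths and failure probabilities so that both the directory overhead and the per-bucket slack sum to only $o(|S|k)$ bits while retaining $O(|S|)$ build time and $O(1)$ queries. The remaining ingredients, namely $O(1)$-time universal hashing on a $w$-bit universe (Dietzfelbinger-style) and the hypergraph peelability threshold, are black-box results that I would cite rather than reprove.
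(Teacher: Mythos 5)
The paper itself does not prove this lemma: it is imported as a black box from Porat \cite{P09}, whose construction is described there as combining hash functions with \emph{matrix solving} over $GF(2^k)$ (with \cite{CC08,DP08} cited as similar but slightly weaker alternatives). Your sketch instead takes the XOR-retrieval route built by hypergraph peeling, and the place where that route is weaker is exactly the step you gloss over, so there is a genuine gap rather than merely a different proof.

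The gap is the space-reduction step. For any fixed $r$, a random $r$-uniform hypergraph with $m$ edges is peelable (empty 2-core) only when the number of table cells exceeds $c_r\,m$ for a constant $c_r>1$; the best constant over all $r$ is $c_3\approx 1.222$, and it does not tend to $1$ as $r$ grows. At density $1-o(1)$, i.e.\ with per-bucket slack $\epsilon_b=o(1)$, the 2-core is non-empty with probability tending to $1$ (indeed exponentially close to $1$ in the bucket size), so redrawing seeds does not help: the expected number of trials per bucket is superpolynomial in $m$, which destroys the $O(|S|)$ build time, and with any constant slack the table costs at least roughly $1.22\,|S|k$ bits, not $|S|k(1+o(1))$. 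Increasing $r$ with $|S|$ would push the solvability threshold to $1+o(1)$ but then queries are no longer constant time. The standard way to reach $1+o(1)$ space with $O(1)$ queries is to keep your bucketing and directory accounting (which is fine) but replace peeling inside each polylogarithmic bucket by solving the full linear system over $GF(2)$ or $GF(2^k)$ with (dense) random rows, where only a lower-order number of extra columns is needed for full row rank with high probability, and to argue separately that the per-bucket solving cost can be kept consistent with linear total construction time --- which is precisely the ``matrix solving'' ingredient of \cite{P09} (and of \cite{DP08}) that your peeling argument was meant to avoid. The peeling/XOR part of your proposal is a correct proof of a constant-factor-space retrieval structure with constant query time, but it cannot be pushed to the succinct bound claimed in the lemma.
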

The result stated in the lemma was first described in~\cite{P09}. 
It combines the use of a set of hash functions with matrix solving on
$GF[2^k]$ (two similar methods are also described in~\cite{CC08,DP08}
but have slightly worse performance). 
%Usually, $U$ is a set of
%integers $[0,2^w-1]$ and thus each element of $S\subseteq U$ is an
%integer of $w$ bits. 
The lemma says that we can have a representation
of a function $f$ from $S\subseteq U=[0,2^w-1]$ into $[0,2^k-1]$ that can
successfully return the correct value for $f(x)$ when queried for an
element $x\in S$, but returns an arbitrary value for any element $x$
outside $S$. Therefore, the representation is unable to detect whether a
given element $x$ is in $S$ or not. This is why the space usage in the
lemma has no dependence on $U$, but instead only depends on $k$ and on the
cardinality of $S$ (it is easy to see that in order to detect whether
$x\in S$ we need to store $S$ in one way or another and thus need to use a space of at least $\Omega(|S|\log |U|)$ bits).
 
\subsubsection{Succinctly Encoded Tries (Cardinal trees) }
A trie (or cardinal tree) is a tree where each edge has a label from the alphabet $\Sigma$. The maximal degree in a trie 
is thus $\sigma=|\Sigma|$. A standard representation of a trie of $N$ nodes would need $O(N\log N)$ bits (essentially the $\log N$ bits are needed to encode pointers in the trie). In our case we need a succinct representation that uses less than $O(N\log N)$ bits, ideally close to the information theoretic lower bound which is about $N\log\sigma+O(N)$ bits. We will thus use the following result described in~\cite{RRS07}:
\begin{lemma}
\label{lemma:succ_trie_lemma}
Given a trie (cardinal tree) having a total of $N$ nodes over an alphabet of size $\sigma\geq 2$, we can build a representation that uses $N(\log\sigma+\log_2 e+o(1))$ bits of space and supports basic navigation operations in constant time. In particular it supports the following operation in constant time: given a node $p$ having identifier $i_p$ and a character $\alpha$, tell whether $p$ has a child $d$ labeled with character $\alpha$ and return its identifier $i_d$. 
\end{lemma}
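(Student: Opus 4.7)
The plan is to sketch the construction of~\cite{RRS07} rather than rederive it in full. The target space is motivated by an information-theoretic count: the number of $\sigma$-ary tries on $N$ nodes is the super-Catalan number $\frac{1}{(\sigma-1)N+1}\binom{\sigma N}{N}$, and a short Stirling calculation shows that
$\log\binom{\sigma N}{N}=N\bigl(\sigma\log\sigma-(\sigma-1)\log(\sigma-1)\bigr)+O(\log(\sigma N))$,
which, by applying the mean value theorem to the function $x\log x$ on $[\sigma-1,\sigma]$, is at most $N(\log\sigma+\log_2 e)+O(\log(\sigma N))$. The lemma asserts that the RRS07 representation attains this bound up to the $o(1)$ multiplicative factor while still offering $O(1)$-time navigation.

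The construction I would follow views the trie as a prefix-closed subset of $[0,\sigma-1]^{*}$ of size $N$, placing cardinal trees in bijection with such sets. This subset is then stored using the indexable-dictionary data structure of~\cite{RRS07}, whose space matches, up to lower-order terms, the log-cardinality computed above, while supporting rank and select in $O(1)$ time. From rank and select on this single underlying dictionary one derives all standard navigation primitives (parent, subtree size, $i$-th child, and so on). For the specific child-by-character query required by the lemma---given the identifier $i_p$ of a node $p$ and a character $\alpha$, decide whether $p$ has an outgoing edge labeled $\alpha$ and, if so, return the identifier $i_d$ of the target---one composes a constant number of rank/select operations with constant-size lookup tables precomputed at build time.

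The main obstacle is the space analysis. A naive two-layer scheme---encode the underlying ordinal shape in $2N+o(N)$ bits via BP or DFUDS, then for each node $v$ encode its outgoing-label set $S_v\subseteq[0,\sigma-1]$ separately in $\lceil\log\binom{\sigma}{d(v)}\rceil+o(d(v))$ bits---can be bounded using the identity $\sum_{T}\prod_{v\in T}\binom{\sigma}{d_T(v)}=\binom{\sigma N}{N}/((\sigma-1)N+1)$ applied to a single shape $T$, combined with the Stirling calculation above, yielding a total of at most $2N+N(\log\sigma+\log_2 e)+o(N)$ bits. This overshoots the target by the additive $2N$ of the shape layer, tolerable for large $\sigma$ but unacceptable at $\sigma=2$. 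Closing this gap by encoding shape and labels jointly through a single unified indexable dictionary---so that the shape information is amortized into the label encoding rather than paid for separately---is the technical heart of~\cite{RRS07}, and is the step I would cite rather than reprove. Once this joint encoding is in place, verifying that the child-by-character primitive reduces to $O(1)$ rank/select operations is routine.
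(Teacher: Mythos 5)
The paper does not prove this lemma at all---it simply imports it as a black-box result from~\cite{RRS07}---and your proposal likewise defers the technical core (the joint shape-plus-label encoding of the indexable-dictionary representation) to that same reference, so the two treatments are essentially identical. Your supporting calculations (the Fuss--Catalan count, the Stirling/entropy bound, and the mean-value-theorem step giving $\log\sigma+\log_2 e$) are correct and only add motivation beyond what the paper states.
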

The operation cited in the lemma is the only one which will be used in this paper.
\subsection{Trees}
\subsubsection{Suffix Tree}

The suffix tree $\mbox{ST}(s)$ is a compact representation of all
suffixes of a given sequence $s=s_1 \ldots s_n$. It is basically a
trie of all suffixes of $s$ where all the nodes with a single child are
merged with their parents. Each transition of the tree is then coded
as an interval $[i,j]$ corresponding to $s_i..s_j$. Its size is $O(n)$ 
and it can be built in $O(n)$ time even on integer alphabet using the 
construction algorithm of \cite{Far97}. An example of such a suffix tree 
is given in Figure \ref{suffixtree}.

\begin{figure}[tb]
  \centering
\includegraphics{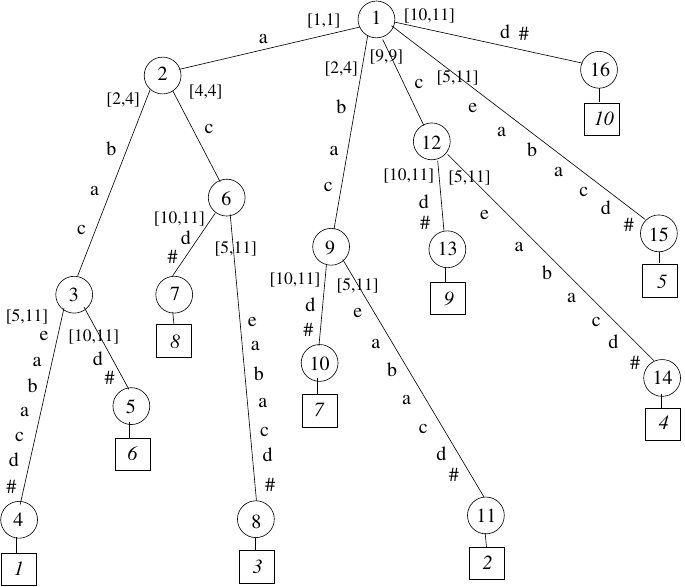}
\caption{Suffix tree of $s = a_1\; b_2\; a_3\; c_4\; e_5\; a_6\; b_7\;
a_8\; c_9\; d_{10} \#_{11}.$ Square boxes contain the initial position
of the suffix. Each edge is labeled by a pair $[k,l]$ pointing to
$s_k..s_l$ that we explicitly write on the edge for clarity.}
 \label{suffixtree}
\vspace*{-0.5cm}
\end{figure}

We assume below that in the suffix tree each transition interval
$[i,j]$ of $\mbox{ST}(s)$ corresponds to the leftmost occurrence of
the factor $s_i\ldots s_j$ in $s$. For instance, in Figure
\ref{suffixtree}, the transition from 1 to 2 is the pointer $[1,1] =
s_1 = a$. This property is ensured by Ukkonen \cite{Ukk92b} algorithm,
but can also be ensured on every suffix tree by a simple additional
$O(n)$ steps.

\subsubsection{Fingerprint Trie}
\label{subsubsec:finger_trie}
We now present the fingerprint trie (this is called backtracking tree
in \cite{CYHW07,CYHW11}).  The fingerprint trie is a tree representation of
the fingerprints. The trie representation exploits the property that
for every $f\in \cal{F}$ such that $|f|\geq 2$ there exists necessarily at
least one other fingerprint $g\in\cal{F}$ and some letter $\alpha$
such that $g\cup \{\alpha\}=f$. In other words, for every $f\in
\cal{F}$ there exists some $g\in\cal{F}$ such that $f$ can be written
as a sequence $\beta_0..\beta_j,\alpha$ (of distinct characters)
and $g\in\cal{F}$ written as a sequence $\beta_0..\beta_j$. 
This property means that the set of
fingerprints can be represented as a trie. More precisely, let
$F_i\subseteq \cal{F}$ be the subset of the fingerprints of $\cal{F}$
where each $f\in F_i$ is of size $i$. At the beginning, we start with
a trie which contains only a root. Then we take the subset $F_1$ 
of all fingerprints in ${\cal F}$ consisting of one character.  
Then for each fingerprint $f\in F_1$ consisting of a character $\alpha$, 
we create a new node corresponding to~$f$ and attach it as a child of 
the root with a link labeled with the character~$\alpha$.  
Then the remainder of the trie can be built level-by-level: for
building level $i\geq 2$, we consider the set ${\cal F}_i$ and for
each $f\in F_i$ do the following:
\begin{enumerate}
\item First consider a fingerprint $g\in F_{i-1}$ (represented by a
  node $q_g$) and a character $\alpha$ such that $g\cup \{\alpha\}=f$
  (by the property above there exists at least one such pair
  $(g,\alpha)$). If there exist several such pairs choose one
  arbitrarily.
\item Then create a new node $q_f$ and attach it as a child of the
  node $q_g$ (which corresponds to $g$) with a link labeled with
  character $\alpha$.
\end{enumerate}

\subsection{Naming Technique}

The naming technique is used to give a
unique name to each fingerprint from~$\cal{F}$. We assume for 
simplicity, but without loss of generality, that
$\sigma$ is a power of two. We consider a stack of $\log\sigma+1$
arrays on top of each other. Each level is numbered from 1. The
lowest, called the fingerprint table, contains $\sigma$ names that
are $[0]$ or $[1]$. Each other array contains half the
number of names that the array it is placed on. The highest array only
contains a single name that will be the name of the whole array. Such
a name is called a fingerprint name. Figure \ref{namingexample} shows
a simple example with $\sigma=8$.

\begin{figure}[htb]
\vspace{-0.3cm}
$$
\begin{array}{|c|c|c|c|c|c|c|c|}
\hline
\multicolumn{8}{|c|}{[7]}\\
\hline
\multicolumn{4}{|c|}{[5]} & \multicolumn{4}{|c|}{[6]}\\
\hline
\multicolumn{2}{|c|}{[2]} & \multicolumn{2}{|c|}{[2]} &
\multicolumn{2}{|c|}{[3]} & \multicolumn{2}{|c|}{[4]}\\
\hline
[1] & [0] & [1] & [0] & [1] & [1] & [0] & [0]\\
\hline
\end{array}
$$
\vspace{-0.5cm}
\caption{Naming example.}
\label{namingexample}
\vspace{-0.3cm}
\end{figure}

The names in the fingerprint table are only $[0]$ or $[1]$ and are
given as input. Each cell $c$ of an upper array represents two cells of the
array it is placed on, and thus a pair of two names. The naming is
done in the following way: for each level going from the lowest to the
highest, if the cell represents a new pair of names, give this pair
a new name and assign it to the cell. If the pair has already been
named, place this name into the cell. In the example in
Figure \ref{namingexample}, the name $[2]$ is associated to $([1],[0])$
the first time this pair is encountered. The second time, this name is
directly retrieved.\\

%% Assume that a specific set ${\cal S}$ of fingerprints can be
%% represented as an ordered list $L$ of fingerprint changes from an empty
%% fingerprint, where a change consists in either adding a character
%% (denoted $+\alpha$) to a fingerprint that does not yet belong to it, or
%% remove a character (denoted $-\alpha$) that already belongs to it.

\noindent
{\bf Naming a List of Fingerprint Changes.}
\label{subchanges}
Assume that a specific set ${\cal S}$ of fingerprints can be
represented as a list $L = (\alpha_1,\alpha_2, \ldots \alpha_p)$ of
distinct characters such that ${\cal S}= \{f_1,f_2,\ldots,f_p \} \mbox{ where
} f_i = \cup_{1 \leq j \leq i} \{ \alpha_j \}.$

The core idea of the
algorithm of \cite{DSST04} is to fill a fingerprint table bottom-up by
building for each level an ordered list of new names that corresponds
to the fingerprint changes induced at the previous level. A
pseudo-code of this naming algorithm is given in Figure
\ref{algo:dekel}.  We explain it below.

\begin{figure}[tb]
\begin{center}
\begin{algorithme}
{\sc Name\_lists}($L=(\alpha_1,\alpha_2, \ldots, \ldots,
\alpha_p)$ initial list of changes)\\ 
\lign $L_1 \leftarrow (\{[0],0\}, \ldots, \{[0],\sigma-1\})$\\ 
\lign add $(\{[1],f_{\Sigma}(\alpha_1)\}, \ldots, \{[1],f_{\Sigma}(\alpha_p)\} )$ to end of $L_1$\\ 
\lign \For{$r = 1..\log \sigma$}\\ 
\lign \> $FT_r \leftarrow$ name table of size $\sigma/2^{r-1}$\\ 
\lign \> $E_{tp} \leftarrow$ first element of $L_r$\\ 
\lign \> \For{$l = 0..\sigma/2^{r-1}-1$}~~~ /* initialization
of table $FT$ */\\ \lign \> \> $\{[a],j\} \leftarrow E_{tp}$\\ 
\lign \> \> $FT_r[j] \leftarrow [a]$\\ 
\lign \> \> $E_{tp} \leftarrow$ next element in $L_r$\\ 
\lign \> \Endfor\\ \lign \> Let $L'_r$ be an empty list\\ 
%\lign \> \For{$l = 0..\sigma/2^{r}-1$}~~~~~ /* initialization of $L'_r$ list */\\ 
%\lign \> \> add $\{(FT[2 l], FT[2 l+1]),l\}$ to end of $L'_r$\\ 
%\lign \> \Endfor\\ 
\lign \> $E_{tp} \leftarrow$ first element of $L_r$\\ 
\lign \> \While{$E_{tp}$ exists}\\ 
\lign \> \> $\{[a],j\} \leftarrow E_{tp}$\\ 
\lign \> \> $FT_r[j] \leftarrow [a]$\\ 
\lign \> \> add $\{(FT_r[2 \lfloor j/2 \rfloor], FT_r[2 \lfloor j/2 \rfloor+1]),\lfloor j/2\rfloor \}$ to end of $L'_r$\\ 
\lign \> \> $E_{tp} \leftarrow$ next element in $L_r$\\ 
\lign \> \Endwhile\\ 
\lign \> sort the pair of names in $L'_r$ in lexicographical order\\ 
\lign \> give new names in each unique pair in $L'_r$\\ 
\lign \> build $L_{r+1}$ by copying $L'_r$ but replacing each pair by its new name\\ 
\lign \Endfor
\end{algorithme}
\end{center}
\vspace{-0.5cm}
\caption{Naming a list $L= (\alpha_1,\alpha_2, \ldots \alpha_p)$ of fingerprint changes.}
\vspace*{-0.5cm}
\label{algo:dekel}
\end{figure}

We number the levels from 1, the lowest, to $\log \sigma +1$. The
original list $L$ is first transformed into a list $L_1$ of changes on
level 1 by replacing each character $\alpha_i$ by the pair
$\{[1],f_{\Sigma}(\alpha_i)\}$. To initialize the process we add a
list of $\sigma$ pairs $\{[0],i\}, \, i = 0.. \sigma-1$ at the beginning
of $L_1$.

This initial list is then used to compute all names of the cells in
the second level. A table $FT$ of $\sigma$ names temporary records
the pair of names to be coded. A list $L'_1$ of pairs of names is
built as follows. The first $\sigma$ elements of $L_1$ are
read to initialize $FT$. The list $L'_1$ is initialized
with $\sigma/2$ pairs built by reading $FT$. Then,
the remainder of the list $L_1$ is read and for each new element
$\{[a],j\}$ {\em (1)} the table $FT$ is changed in position $j$ by $FT[j]
\leftarrow [a]$ and {\em (2)} the pair $\{ (FT[2 \lfloor j/2
\rfloor], FT[2 \lfloor j/2 \rfloor+1]), \lfloor j/2\rfloor \}$ is added to the end of
$L'_1$. This means that in cell $\lfloor j/2\rfloor $ of the
second level a name has to be given to the name pair ($FT[2 \lfloor
j/2 \rfloor], FT[2 \lfloor j/2 \rfloor+1]).$

At this point $L'_1$ records the list of changes to be made in the
cells at level 2 and the pairs of names that must receive a
name. The pairs in this list are then sorted in lexicographical order
(through a radix sort) and a new name is assigned to each distinct
pair of names $(n_1,n_2)$. A new list $L_2$ is built from $L'_1$
(keeping the initial order of $L'_1$ and thus of $L_1$) by replacing
each pair with its new name. For instance, if $\{([1],[0]),1\}$ was in
the list $L'_1$ and if the pair $([1],[0])$ received the new name
$[2]$, then $L_2$ now contains $\{[2],1\}$. 

The list $L_2$ is the input at level 2 and the same process is
repeated to obtain the names in the third level, and so on. The last
list $L_{\log \sigma +1}$ contains the names of all the fingerprints
of ${\cal S}$.\\

\noindent
{\em Complexity.} The sum $\sigma + \sigma/2 + \sigma/4 +\ldots$
(lines 1 and 6-10 of pseudo-code in Fig. \ref{algo:dekel}) for all cell
initializations is bounded by $2\sigma$. The remaining construction of
$L_1$ (line 2) requires $\Theta(|L|)$ time.  Then a linear sort of
$\Theta(|L|)$ elements is performed for every level. As there are
$\log \sigma +1$ levels, naming the list takes $\Theta(\sigma + |L|\log
\sigma)$ time.

\section{Faster Fingerprint Computation}
\label{redundant}

Let $q \in {\cal
L}_{C}$ and $\langle i,j \rangle$ be a maximal location of $q,$ then we denote
$\mbox{st}_s(q)$ as the string $s_i..s_j.$ Table \ref{copytable} shows an
example of a copy relation. Note that the number $|{\cal L}_C|$ can be significantly less than
 $|{\cal L}|$. As an example, we can consider the word $w_k$ over the
 alphabet $\Sigma_k=\{a_1,a_2,\ldots,a_k\}$ which is defined in the
 following inductive way: $w_1=a_1$ and $w_k=w_{k-1} (a_1 a_2\ldots
 a_k)^k $ for $k>1$. For this word we have
 $|w_k|=\frac{1}{6}k(k+1)(2k+1)$, $|{\cal L}|=
 \frac{1}{12}k(3k^3+2k^2-9k+16)=\Theta (|w_k|^{4/3})$, and $|{\cal
   L}_C|=\frac{1}{6}k(k^2+5)=\Theta (|w_k|)$.  Thus, in this case
 $|{\cal L}_C|=o(|{\cal L}|)$ as $k\to\infty$.

\begin{table}[b]
\begin{center}
$\begin{array}{|c|l|l|}
\hline
\mbox{Class } q  & \mbox{ Maximal locations } &  \mbox{st}_s(q)\\
\hline
  I            & \;     \emptyset           &   \;  \varepsilon \\ 
  1            & \;  a_{1}\,|\,a_{3}\,|\,a_{6}\,|\, a_{8} &   \; a\\
  2            & \;    b_{2} \,|\, b_{7}                        &  \;   b \;  \\
 3            & \;     c_{4}   \,|\,   c_{9}                             &  \;  c \;  \\
 4            & \;      d_{10}                                  & \;   d \;  \\
 5            & \;        e_{5}                                & \;   e \;  \\
 6  &\; a_{3}c_{4} \,|\, a_{8}c_{9}  &\; ac  \\
 7 & \;            c_{9}d_{10}  &\; cd  \\
8 & \;            c_{4}e_{5}  &\; ce  \\\hline
\end{array}$\quad
$\begin{array}{|c|l|l|}
\hline
\mbox{Class } q  & \mbox{ Maximal locations } &  \mbox{st}_s(q)\\
\hline       
9 & \;            e_{5}a_{6}  &\; ea  \\
10    & \;     a_{1}b_{2}a_{3}  \,|\,   a_{6}b_{7}a_{8}   &  \; aba  \;  \\
11    & \;     a_{1}b_{2}a_{3}c_4  \,|\,   a_{6}b_{7}a_{8}c_9   &  \; abac  \;  \\
12    & \;     a_{8}c_{9}d_{10} &\; acd \\
13 &\; a_{3}c_{4}e_{5}a_{6} &\; acea  \\
14  &\;  e_5a_6b_7a_8  &\; eaba\\
15    & \;      a_{6}b_{7}a_{8}c_9d_{10}   &  \; abacd  \;  \\
16  &\;  a_{1}b_{2}a_3c_4e_5a_6b_7a_8c_9 &\; abaceabac  \\
17  &\; a_{1}b_{2}a_{3}c_{4}e_{5}a_{6}b_{7}a_{8}c_{9}d_{10}  &\; abaceabcd  \\
\hline
\end{array}$
\end{center}
\vspace*{-0.2cm}
\caption{Copy relation example for $s = a_1\; b_2\; a_3\; c_4\; e_5\; a_6\; b_7\; a_8\; c_9\; d_{10}.$}
\label{copytable}
\end{table}

\subsubsection{Participation Tree}
\label{parttree2}

Let $s=s_1 .. s_n$ be a simple sequence of characters over
$\Sigma$.  In this first phase, for reasons that will become clear
below, we add to the sequence a last character $s_{n+1}=\#$ that does
not appear in the sequence. Thus $s=s_1 .. s_{n}\#_{n+1}$.
Let $i$ and $j$ be positions in $s$, $1 \leq i \leq j \leq n+1$. We define
$\mbox{fo}_s(i,j)$ as the string formed by concatenating the first
occurrences of each distinct character touched when reading $s$ from
position $i$ (included) to position $j$ (included). For instance, if
$s = a_1 b_2 a_3 c_4 e_5 a_6 b_7 a_8 c_9 d_{10}\#$, $\mbox{fo}_s(3,9)=
aceb$ and $\mbox{fo}_s(5,10) = eabcd.$

\begin{definition}
  Let $s=s_1 .. s_n s_{n+1}$ with $s_{n+1}=\#$ and $1\leq i \leq n$ be a
 position in $s$. Let $j>i$ be the minimum position such that $s_j=s_i$
 if it exists, $j=n+2$ otherwise. We define $\mbox{lfo}_s(i) =
 \mbox{fo}_s(i,j-1)$.
\end{definition}

For instance, if $s = a_1b_2c_3a_4d_5a_6b_7a_8c_9b_{10}e_{11} \#_{12}$,
$\mbox{lfo}_s(1) = abc$ and $\mbox{lfo}_s(5) =$ $dabce\#.$

The participation tree resembles a tree of all $\mbox{lfo}_s(i)$ in
which we removed terminal characters (the need of this removal will
appear clearly below). It contains the same path labels. The participation
tree allows some redundancy in the path labels, i.e. the same path label
might correspond to several paths from the root. Thus, our tree is not
always ``deterministic'' in the sense that a node can have several
transitions by the same character. We define it and build it from the
suffix tree by cutting and shrinking edges.

Let $s=s_1..s_n s_{n+1}$ where $s_{n+1}=\#.$ The participation tree
$PT(s)$ is built from the suffix tree $ST(s)$ in the following
way. Imagine the suffix tree in an ``expanded'' version, that is, each
edge $[i,j]$ is explicitly written by the corresponding factor
$s_i..s_j$ (see Figure \ref{suffixtree}).  Let us consider the
sequence of characters on some path from the root and let $\alpha$ be
the first character on this path. Let $o$ be the second occurrence of
$\alpha$ on this path if it exists. We perform the following steps:

\begin{enumerate}
\item We first reduce all characters on this path after $o$
(included) to the empty string $\varepsilon$;
\item Then, on the section from
the root to the character before $o$ we only keep the first occurrence
of each appearing character, i.e. the others are reduced to
$\varepsilon$;
\item We then replace the terminal character of each path from the
root to a leaf by $\varepsilon$;
\item We replace all multi-character edges by an equivalent series of
a single character and a node. An example of such a resulting tree is
shown In Figure \ref{parttree} (left);
\item As a last step, all $\varepsilon$ edges $(p,\varepsilon,q)$ are
removed by merging $p$ and $q$. The resulting tree is the
participation tree.  An example of this last tree is shown in Figure
\ref{parttree} (right).
\end{enumerate}

%%  We first reduce all characters on this path after $o$
%% (included) to the empty string $\epsilon$.  Then, on the section from
%% the root to the character before $o$ we only keep the first occurrence
%% of each appearing character, i.e. the others are reduced to
%% $\epsilon$. Next, we remove the last character of each path from the
%% root to a leaf. Finally, after processing all paths from the root, we
%% remove all $\varepsilon$-characters (and merge nodes if necessary) and
%% all multi-characters edges are replaced by an equivalent serie of a
%% single character and a node. An example of such a tree is shown on the
%% Figure \ref{parttree} (left). As a last step, the $\varepsilon$ edge
%% $(p,\varepsilon,q)$ are removed by merging $p$ and $q$.

For each node $q$ of $ST(s)$ and $PT(s)$ we denote by $\mbox{Suff}(q)$
the set of suffixes of $s$ that appear as leaves of the subtree rooted
in $q$. We consider below that the suffixes associated to a node in
$ST(s)$ remain associated to the node in $PT(s)$, even after the
merging. This is shown in Figure \ref{parttree}: the suffixes in the
square boxes associated to nodes $4$ and $5$ in the left picture are
associated to node $2$ in the participation tree (right picture).

%% \begin{figure}[tb]
%%   \centering
%% \includegraphics{part_tree.pdf}
%% \caption{Participation tree of $s = a_1\; b_2\; a_3\; c_4\; e_5\; a_6\; b_7\; a_8\; c_9\; d_{10}.$ New nodes are in gray.}
%%  \label{parttree}
%% \end{figure}

\begin{figure}[tb]
%  \centering
\includegraphics{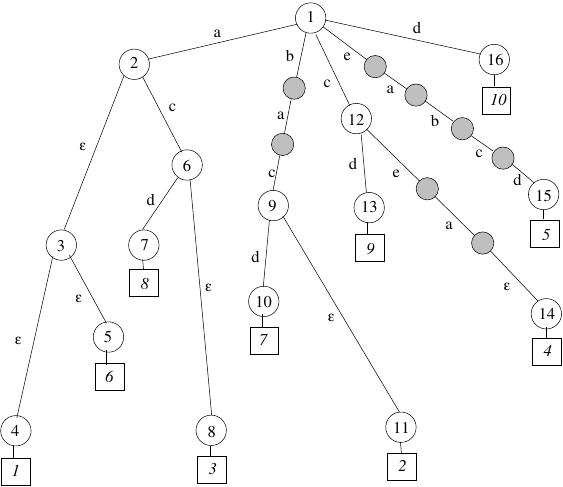} \quad\quad
\includegraphics{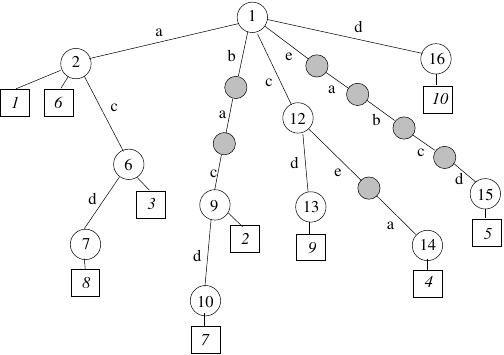}
\caption{From suffix tree to the participation tree (right picture) of
$s = a_1 b_2 a_3 c_4 e_5 a_6 b_7 a_8 c_9 d_{10} \#_{11}.$
New nodes are in gray. The $\varepsilon$ transitions are removed in
the last step. Attached suffixes are shown in square boxes.}
 \label{parttree}
%\vspace*{-0.5cm}
\end{figure}

\begin{lemma}
Let $s= s_1..s_n$. For all $i=1, \ldots ,n$, each proper prefix of $\mbox{lfo}_s(i)$ labels a
path from the root in $PT(s).$
\label{alllfo}
\end{lemma}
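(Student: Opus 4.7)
The plan is to trace a single suffix through the construction of $PT(s)$ and identify which labels survive along its root-to-leaf image. Fix $i\in\{1,\ldots,n\}$ and let $j$ be the smallest index with $j>i$ and $s_j=s_i$, or $j=n+2$ if no such index exists. In $\mbox{ST}(s)$ the suffix $s_i\ldots s_{n+1}$ labels a unique path $\pi_i$ from the root to a leaf $L_i$, and I will argue that after the five reduction steps defining $PT(s)$ the image of $\pi_i$ is a root-to-node path labeled by $\mbox{lfo}_s(i)$, except possibly for a stripped terminal $\#$.

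First I would apply the five steps to $\pi_i$ one by one. The first character on $\pi_i$ is $s_i$, so step (1) reduces every label at positions $j,j+1,\ldots,n+1$ to $\varepsilon$ (vacuously so when $j=n+2$). Step (2) then keeps only the first occurrences inside positions $[i,j-1]$, so the surviving labels in that range are exactly the characters listed in $\mbox{fo}_s(i,j-1)$. Step (3) strips the terminal $\#$ if it still appears, which happens precisely when $j=n+2$. Steps (4) and (5) are purely structural---splitting multi-character edges and then contracting the remaining $\varepsilon$-edges---and neither alters the ordered sequence of non-$\varepsilon$ labels along the path.

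Collecting these effects, the image of $\pi_i$ in $PT(s)$ is labeled by $\mbox{lfo}_s(i)$ when $j\le n+1$, and by $\mbox{lfo}_s(i)$ with its final $\#$ removed when $j=n+2$. In either case every proper prefix of $\mbox{lfo}_s(i)$ is a prefix of this label string; since any prefix of a root-to-node path is itself a root-to-node path (reaching an ancestor node), every such proper prefix labels a path from the root in $PT(s)$, as required.

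The main technical subtlety I would need to verify is that the five reduction steps, although phrased per path, are mutually consistent on edges shared by several suffixes. This is not a real difficulty: both the choice of the first character $\alpha$ in step (1) and the first-occurrence predicate in step (2) are determined by the root-prefix reaching a given edge, so any two root-to-leaf paths sharing a prefix agree on every cancellation decision made on that shared portion of the tree. Once this consistency is spelled out, the per-path analysis above applies unchanged and yields the lemma.
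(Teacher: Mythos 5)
Your overall strategy is the same as the paper's: trace the root-to-leaf path of suffix $i$ through the reduction steps and read off its surviving label (the paper compresses exactly this into one sentence). Steps (1), (2), (4), (5) are handled correctly, and your remark that the cancellation decisions of steps (1)--(2) depend only on the root-prefix, hence are consistent on shared edges, is a point the paper leaves implicit. The problem is your reading of step (3). That step replaces the \emph{terminal character of each root-to-leaf path} by $\varepsilon$, i.e.\ the last surviving character of the reduced path, which is the last character of $\mbox{lfo}_s(i)$ --- not merely a trailing $\#$. This removal is deliberate (the paper describes $PT(s)$ as ``a tree of all $\mbox{lfo}_s(i)$ in which we removed terminal characters'') and it is needed later: the proof of Lemma~\ref{phinonempty} requires every path label of $PT(s)$ to be a \emph{proper} prefix of some $\mbox{lfo}_s(m)$, which is precisely why the present lemma claims only proper prefixes. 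Consequently your intermediate claim that the image of $\pi_i$ is labeled by the full $\mbox{lfo}_s(i)$ whenever $s_i$ repeats is false in general: for $s=abab\#$ one has $\mbox{lfo}_s(2)=ba$, but the path of suffix $2$ in $PT(s)$ spells only $b$, the terminal $a$ on its leaf edge having been stripped; $ba$ labels no path from the root.

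The misreading is not fatal to the conclusion --- stripping at most the last character of suffix $i$'s own reduced path still leaves every proper prefix of $\mbox{lfo}_s(i)$ as a path label --- but it hides the one consistency issue that genuinely needs an argument. Under the correct reading of step (3), the terminal character of one suffix's reduced path may lie on an edge shared with another suffix: in $abab\#$ the terminal $b$ of suffix $1$'s reduced path lies on the edge shared with suffix $3$, whose proper prefix $ab$ must survive. So one has to check that the removals prescribed by step (3) never delete a character lying strictly inside another suffix's reduced label (in the paper's algorithm of Figure~\ref{algo:parttree} the stripped characters end up on leaf-side/private portions of the paths). Your consistency paragraph covers only steps (1)--(2), because your version of step (3) (stripping $\#$'s) is trivially consistent; with the correct step (3) this verification is missing, and it is exactly what your argument needs in order to guarantee that all proper prefixes of $\mbox{lfo}_s(i)$ survive for \emph{every} $i$ simultaneously.
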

\begin{preuve}
When nodes are ignored, the reduction of the path of a suffix $i$ in the suffix tree
corresponds to $\mbox{lfo}_s(i)$ without its terminal character.
\end{preuve}

Note that a proper prefix of $\mbox{lfo}_s(i)$ might label several
paths from the root in $PT(s).$

Let $[i,j]$ be an interval on $s=s_1..s_{n}$ and let
$\mbox{Support}([i,j])$ be the minimal position $p, i \leq p \leq j$, of the rightmost
occurrences of each letter in $s_i \ldots s_j.$ We define $O_s^{[i,j]}$ as
$\mbox{fo}_s(\mbox{Support}([i,j]),j).$ For instance, if $s = a_1 b_2
a_3 c_4 e_5 a_6 b_7 a_8 c_9 d_{10}\#_{11}$, $\mbox{Support}([1,3])=2,$ $\mbox{Support}([4,10])=5,$ $O_s^{[1,3]}=ba$ and $O_s^{[4,10]}=eabcd.$

%% Let $[ i,j]$ be an interval on $s=s_1..s_n$. Let $R_s^{\alpha}$ be the
%% indice in $[i,j]$ of the rightmost occurrence of $\alpha$ in $s_i
%% \ldots s_j$ if there exists $p$ such that $p = s_k, \, i \leq p \leq
%% j.$ We define $R_s^{[i,j]}$ as the string formed by the concatenation
%% of each $\alpha$ in the increasing order of their $R_s^{\alpha}$. Let
%% $\mbox{Support}([i,j])$ be the minimum of all its $R_s^{\alpha}$. For
%% instance, if $s = a_1 b_2 a_3 c_4 e_5 a_6 b_7 a_8 c_9 d_{10}$,
%% $R_s^{\langle 1,3 \rangle}=ba$ and $R_s^{[4,10]}=abcd.$
%% $\mbox{Support}( \langle 1,3 \rangle)=2$ and
%% $\mbox{Support}([4,10])=10.$

\begin{definition}
Let $s= s_1..s_n$ and $1 \leq i \leq j \leq n$. We define
$\mbox{Extend}_s(i,j)$ as the maximal location reached when extending
the interval $[i,j]$ to the left and to the right while the closest
external characters $s_{i-1}$ or $s_{j+1}$ (if they exist) belong to
$C_s(i,j)$.
\end{definition}

 For instance, if $s = a_1\; b_2\; a_3\; c_4\; e_5\; a_6\; b_7\; a_8\;
c_9\; d_{10} \#_{11}$, $\langle 1,4 \rangle = \mbox{Extend}_s(2,4)$ and
$\langle 1,9 \rangle = \mbox{Extend}_s(2,7)$

\begin{lemma}
Let $\langle i,j \rangle$ be a maximal location of $s=s_1..s_n$.  There exists
a permutation of all characters of $C_s(i,j)$ that labels a path from the root in
$PT(s).$
\label{lemmain}
\end{lemma}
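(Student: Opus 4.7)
The plan is to exhibit an explicit permutation of $C_s(i,j)$, namely the string $O_s^{[i,j]} = \mbox{fo}_s(\mbox{Support}([i,j]), j)$ defined just above, and then show that it is a proper prefix of $\mbox{lfo}_s(p)$ for $p=\mbox{Support}([i,j])$, so that Lemma \ref{alllfo} immediately gives a path from the root in $PT(s)$ labeled by it.

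First I would check that $O_s^{[i,j]}$ really is a permutation of $C_s(i,j)$. By definition of $\mbox{fo}$, it concatenates first occurrences of distinct characters, hence lists each letter of $C_s(p,j)$ exactly once. Since $i \leq p \leq j$, we have $C_s(p,j) \subseteq C_s(i,j)$. Conversely, by definition of $\mbox{Support}$, $p$ is the smallest position among the rightmost occurrences (in $[i,j]$) of the letters of $C_s(i,j)$, so every such rightmost occurrence lies in $[p,j]$, giving $C_s(p,j) \supseteq C_s(i,j)$. Hence $C_s(p,j) = C_s(i,j)$ and $O_s^{[i,j]}$ is indeed a permutation of $C_s(i,j)$.

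Next I would show that $O_s^{[i,j]}$ is a proper prefix of $\mbox{lfo}_s(p) = \mbox{fo}_s(p, j'-1)$, where $j'$ is the next occurrence of $s_p$ after $p$ (or $n+2$ if none). Let $\alpha_0 = s_p$. Since $p$ is the rightmost occurrence of $\alpha_0$ inside $[i,j]$, we have $j' > j$. Moreover, using maximality of $\langle i,j\rangle$, the character $s_{j+1}$ (if $j<n$) is not in $C_s(i,j)$ and in particular $s_{j+1}\neq \alpha_0$, which forces $j'\geq j+2$, hence $j'-1 > j$. If instead $j=n$, then the appended sentinel $\#$ at position $n+1$ is a fresh character. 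In either case, the window $[p,j'-1]$ strictly extends $[p,j]$ by at least one new character, so $\mbox{fo}_s(p,j'-1)$ strictly extends $\mbox{fo}_s(p,j)$; that is, $O_s^{[i,j]}$ is a proper prefix of $\mbox{lfo}_s(p)$.

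Applying Lemma \ref{alllfo} to $\mbox{lfo}_s(p)$ then yields that $O_s^{[i,j]}$ labels a path from the root in $PT(s)$, which is the required permutation. The only delicate point is handling the boundary cases $j=n$ and $j'=n+2$, which is precisely why the sentinel $\#_{n+1}$ was appended and why terminal characters were stripped when defining $PT(s)$ — this ensures that the ``proper prefix'' hypothesis of Lemma \ref{alllfo} is satisfied uniformly in all cases.
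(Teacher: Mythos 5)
Your proof is correct and follows exactly the paper's argument: the paper also takes $O_s^{\langle i,j\rangle}$, observes (without detail) that it is a permutation of $C_s(i,j)$ and a proper prefix of $\mbox{lfo}_s(\mbox{Support}(\langle i,j\rangle))$, and concludes by Lemma~\ref{alllfo}. You merely fill in the details the paper dismisses as ``obvious,'' including a careful justification of the \emph{proper}-prefix claim via maximality of $\langle i,j\rangle$ and the sentinel $\#$, which is a welcome elaboration.
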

\begin{preuve}
$O_s^{ \langle i,j \rangle}$ is obviously a permutation of $C_s(i,j)$
  and a proper prefix of $\mbox{lfo}_s(\mbox{Support}( \langle i,j
  \rangle))$, which, by lemma \ref{alllfo}, labels a path from the
  root in $PT(s).$
%\label{allmaximallocation}
\end{preuve}

%\vspace*{0.5cm}
\begin{corollary}
Let $s= s_1..s_n$. For all $i,j, 1 \leq i \leq j \leq n$, there exists
a permutation of all characters of $C_s(i,j)$ that labels a path from the root in
$PT(s).$ \label{allfingerprintstree}
\end{corollary}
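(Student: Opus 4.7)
The plan is to reduce the Corollary directly to Lemma \ref{lemmain} by invoking the $\mbox{Extend}_s$ operation just defined. The essential point is that extending an interval in the sense of $\mbox{Extend}_s$ preserves the fingerprint, so the corollary for arbitrary intervals follows from the lemma for maximal locations without any extra work.

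First, I would observe that the definition of $\mbox{Extend}_s(i,j)$ only moves the boundaries outward across characters that already belong to $C_s(i,j)$. Consequently, no new character can be introduced into the fingerprint during the extension: writing $\langle i',j'\rangle = \mbox{Extend}_s(i,j)$, we have $C_s(i',j') = C_s(i,j)$. Moreover, by construction $\langle i',j'\rangle$ is a maximal location, since the extension stops precisely when either a boundary of the sequence is reached or the adjacent external character is not in $C_s(i,j) = C_s(i',j')$, which is exactly conditions (2) and (3) of the definition of a maximal location.

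Next, I would apply Lemma \ref{lemmain} to the maximal location $\langle i',j'\rangle$: it yields a permutation of the characters of $C_s(i',j')$ that labels a path from the root in $PT(s)$. Since $C_s(i',j') = C_s(i,j)$, this is also a permutation of the characters of $C_s(i,j)$, which is what the corollary requires.

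There is no real obstacle here; the only thing to be careful about is verifying that the extension step neither adds nor removes characters from the fingerprint, which is immediate from the condition governing the extension. The argument is therefore a one-line reduction to Lemma \ref{lemmain} via $\mbox{Extend}_s$.
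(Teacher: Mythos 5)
Your proposal is correct and follows exactly the paper's own argument: extend $[i,j]$ via $\mbox{Extend}_s$ to a maximal location with the same fingerprint, then invoke Lemma \ref{lemmain}. You simply spell out in more detail the facts (fingerprint preservation and maximality of the extended interval) that the paper treats as immediate.
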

%\vspace*{-0.5cm}
\begin{preuve}
It suffices to extend the segment $s_i..s_j$ to $ \langle k,l
\rangle=\mbox{Extend}_s(i,j)$ in which it is contained. Then
$C_s(i,j)=C_s(k,l)$ and lemma \ref{lemmain} applies.
\end{preuve}
%\vspace*{0.5cm}

Let $z= ((r,\alpha_1,p_1), \ldots ,(p_{i-1},\alpha_{i},p_i))$ be a
path in $PT(s= s_1..s_n)$ from its root $r$. By notation extension, we
denote $\mbox{Suff}(z) = \mbox{Suff}(p_i)$.  Let $\mbox{SPref}(s)$ be
the set of all such paths and $w(z) = \alpha_1\alpha_2..\alpha_i$. Let
${\cal P(L)}$ be the set of all sets of maximal locations.We consider
the function $\Phi$ formally defined as:
$$\begin{array}{l|ccl}
\Phi : & \; \mbox{SPref}(s) & \longrightarrow &   {\cal P(L)} \\
    & \; z & \longmapsto &  \{  \langle k,l \rangle \in {\cal L}\; | \; O_s^{ \langle k,l \rangle} = w(z) \mbox{ and } 
\mbox{Support}( \langle k,l \rangle) \in \mbox{Suff}(z) \}
\end{array}$$

\begin{lemma}
Let $z= ((r,\alpha_1,p_1), \ldots ,(p_{i-1},\alpha_{i},p_i))$ be a
non-empty path in $\mbox{SPref}(s).$ Then $\Phi(z)\not=\emptyset$.
\label{phinonempty}
\end{lemma}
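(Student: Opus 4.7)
The plan is to construct an explicit witness in $\Phi(z)$. First, pick any suffix position $m\in\mbox{Suff}(p_i)$; this set is non-empty because $p_i$ has at least one descendant leaf in the subtree rooted at it. By Lemma~\ref{alllfo} together with the construction of $PT(s)$ from $ST(s)$ (in particular step~3, which deletes the terminal character of each root-to-leaf path, and step~5, which only contracts $\varepsilon$-edges and so preserves the sequence of non-$\varepsilon$ labels), the path from the root to the leaf associated to $m$ is labelled by $\mbox{lfo}_s(m)$ with its last character removed. Since this path must pass through $p_i$, the word $w(z)=\alpha_1\cdots\alpha_i$ is a proper prefix of $\mbox{lfo}_s(m)$, so $\mbox{lfo}_s(m)$ has a well-defined $(i{+}1)$-st character $\beta\notin\{\alpha_1,\ldots,\alpha_i\}$, occurring at some position $r_{i+1}$. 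Let $m=r_1<r_2<\cdots<r_i<r_{i+1}$ be the successive positions of the first occurrences of $\alpha_1,\ldots,\alpha_i,\beta$ when scanning to the right from $m$, and write $t$ for the next position $>m$ at which $s_t=\alpha_1$ (or $t=n+2$ if none exists). By the definition of $\mbox{lfo}_s(m)=\mbox{fo}_s(m,t-1)$, we have $r_{i+1}<t$.

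I would then set $\ell=r_{i+1}-1$ and take $k\le m$ to be the leftmost position such that $s_k,\ldots,s_\ell\in\{\alpha_1,\ldots,\alpha_i\}$ and either $k=1$ or $s_{k-1}\notin\{\alpha_1,\ldots,\alpha_i\}$. The interval $[m,\ell]$ contains each $\alpha_j$ (through the position $r_j$) and no character outside $\{\alpha_1,\ldots,\alpha_i\}$ (because $r_{i+1}$ is the first position at which a character outside this set appears), so $C_s(k,\ell)=\{\alpha_1,\ldots,\alpha_i\}$. Moreover $s_{\ell+1}=\beta$ falls outside this fingerprint, which, combined with the definition of $k$, makes $\langle k,\ell\rangle$ a maximal location.

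The crucial verification is that $\mbox{Support}(\langle k,\ell\rangle)=m$. Because $r_{i+1}<t$, the character $\alpha_1$ does not reappear anywhere in $(m,\ell]$, so the rightmost occurrence of $\alpha_1$ in $[k,\ell]$ is exactly $m$. For every $j\ge 2$ the rightmost occurrence of $\alpha_j$ in $[k,\ell]$ lies at a position at least $r_j>m$. Hence $m$ is the minimum of the rightmost occurrences, i.e.\ $\mbox{Support}(\langle k,\ell\rangle)=m\in\mbox{Suff}(z)$, and consequently $O_s^{\langle k,\ell\rangle}=\mbox{fo}_s(m,\ell)=\alpha_1\cdots\alpha_i=w(z)$, which yields $\langle k,\ell\rangle\in\Phi(z)$.

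The only non-trivial ingredient is the choice of the right endpoint $\ell=r_{i+1}-1$: stopping just before the $(i{+}1)$-st new character simultaneously yields right-maximality (because $\beta$ lies outside the fingerprint) and excludes a second copy of $\alpha_1$ from the interval (because $r_{i+1}<t$), which is exactly what forces the support to land on $m$. The left-extension and the $O_s$ computation are then routine consequences.
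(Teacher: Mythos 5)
Your proof is correct and takes essentially the same route as the paper's: both exhibit the same witness maximal location, obtained by taking $m\in\mbox{Suff}(z)$ with $w(z)$ a proper prefix of $\mbox{lfo}_s(m)$, stopping the interval just before the first occurrence of the $(i{+}1)$-st distinct character of $\mbox{lfo}_s(m)$, extending to the left, and using the fact that $s_m=\alpha_1$ cannot recur before that character to force $\mbox{Support}(\langle k,\ell\rangle)=m$ and $O_s^{\langle k,\ell\rangle}=w(z)$. The only cosmetic difference is that the paper obtains the interval as $\mbox{Extend}_s(m,p)$ with $p$ the first occurrence of $\alpha_i$ after $m$, whereas you write the two endpoints down explicitly.
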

\begin{preuve}
By construction of the participation tree, there exits $m \in
\mbox{Suff}(z)$ such that $\alpha_1 \ldots \alpha_{i}$ is a proper prefix of
$\mbox{lfo}(m)$. Let $p$ be the first position of $\alpha_i$ in $s$
following $m.$  Then $\cup_{1 \leq f \leq
i} \{ \alpha_f\}$ $=$ $C_s(m,p).$ Let $ \langle k,l \rangle =
\mbox{Extend}_s(m,p)$. 

We prove now that $\mbox{Support}(\langle k,l \rangle)=m$. As
$\alpha_1 \ldots \alpha_{i}$ is a proper prefix of $\mbox{lfo}(m)$,
there exists an $\alpha=\mbox{lfo}(m)_{i+1}$ such that there is no
occurrence of $\alpha$ in the interval $[m,p]$, and thus after the
extension of $[m,p]$ to a maximal location $\langle k,l \rangle$, the
indice $l$ is strictly less than the indice of the first occurrence of
$\alpha$ after $m$. As, by definition of $\mbox{lfo}(m)$, there is no
occurrence of $s_m$ before the indice of $\alpha$ after $m$ in $s$,
there is no other occurrence of $s_m$ at the right of $s_m$ in the
interval $[m,l].$ Moreover, since all characters in $\alpha_1 \ldots
\alpha_{i}$ and only them appear after $m$ in $[m,l]$ in the order of
$\alpha_1 \ldots \alpha_{i}$ and the extension procedure ensures
that all characters in $[k,m]$ are characters from $\alpha_1 \ldots
\alpha_{i}$, we have $\mbox{Support}(\langle k,l \rangle)=m.$

Finally, it is obvious that $O_s^{ \langle k,l \rangle}= O_s^{[m,p]} =
\alpha_1..\alpha_{i}= w(z)$, and thus $\langle k,l \rangle \in
\Phi(z).$
\end{preuve}

\begin{lemma}
Let $z_1,\, z_2 \in \mbox{SPref}(s)$ be two distinct non-empty paths.  
Then $\Phi(z_1) \cap \Phi(z_2) = \emptyset$.
\label{phinonover}
\end{lemma}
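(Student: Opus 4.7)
The plan is to argue by contradiction: suppose some maximal location $\langle k,l\rangle$ lies in both $\Phi(z_1)$ and $\Phi(z_2)$, and derive an impossibility from the structural properties of $PT(s)$. From the definition of $\Phi$ this single assumption delivers two equalities at once: first $w(z_1) = O_s^{\langle k,l\rangle} = w(z_2)$, so the two labels are identical as strings (and in particular have the same length); second $\mbox{Support}(\langle k,l\rangle) \in \mbox{Suff}(z_1) \cap \mbox{Suff}(z_2)$, so the endpoint subtrees of the two paths share at least one attached suffix.

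Next I would exploit that $PT(s)$ is a tree and that each $z_i$ is a path from the root. Let $p_1, p_2$ denote the endpoint nodes of $z_1, z_2$. In any tree the path from the root to a given node is uniquely determined by that node, so the assumption $z_1 \neq z_2$ forces $p_1 \neq p_2$. I would then perform a case analysis on the relative position of $p_1$ and $p_2$ in $PT(s)$.

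In the first case, neither $p_1$ nor $p_2$ is an ancestor of the other. Then their rooted subtrees are disjoint, hence the sets of attached suffixes satisfy $\mbox{Suff}(p_1) \cap \mbox{Suff}(p_2) = \emptyset$, contradicting the existence of a common $\mbox{Support}(\langle k,l\rangle)$. In the second case, one is a strict ancestor of the other, say $p_1$ is a strict ancestor of $p_2$. Since every edge of $PT(s)$ carries exactly one character, the depth of a node equals the length of its root-path; therefore $|w(z_1)| < |w(z_2)|$, contradicting the equality of words deduced above. Both cases yield a contradiction, so $\Phi(z_1) \cap \Phi(z_2) = \emptyset$.

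I do not expect any single step to be a serious obstacle, but the point that needs the most care is the claim that distinct paths from the root necessarily end at distinct nodes, together with the assertion that each suffix is still attached to a unique node after the $\varepsilon$-merging phase of the construction of $PT(s)$ from $ST(s)$. Once these tree-theoretic facts are in place, the proof is essentially a bookkeeping exercise combining the two constraints imposed by membership in $\Phi(z_i)$.
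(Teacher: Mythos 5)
Your proof is correct and follows essentially the same route as the paper: the shared suffix $m=\mbox{Support}(\langle k,l\rangle)$ forces the two path endpoints into an ancestor--descendant relation (your case 1 rules out the disjoint-subtree alternative, which the paper leaves implicit in ``one of the paths is a prefix of the other''), and the equality $w(z_1)=w(z_2)$ then forces the paths to coincide, contradicting $z_1\neq z_2$. Your explicit two-case split is just a slightly more detailed rendering of the paper's argument.
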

\begin{preuve}
Assume {\em a contrario} that there exists $ \langle k,l \rangle \in
\Phi(z_1) \cap \Phi(z_2)$. Let $m=\mbox{Support}(\langle k,l
\rangle),$ $m \in \mbox{Suff}(z_1)$ and $m \in \mbox{Suff}(z_2)$. Thus
one of the paths is a prefix of the other. As $O_s^{[k,l]}= w(z_1) =
w(z_2)$, the two paths must be equal, which contradicts the hypothesis.
\end{preuve}

\begin{lemma}
Let $ \langle i,j \rangle$ and $ \langle k,l \rangle$ be two distinct
maximal locations of $s=s_1..s_n$ in the same equivalence class of
${\cal L}_C$. Then there exits $z \in \mbox{SPref}(s)$ such that 
both $ \langle i,j \rangle $ and $ \langle k,l \rangle $ are
contained in $\Phi(z)$.
\label{inthesame}
\end{lemma}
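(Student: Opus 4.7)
The plan is to exhibit a single path $z \in \mbox{SPref}(s)$ that witnesses both locations, by tracing them through $ST(s)$ and then through the construction of $PT(s)$. Write $m_1 = \mbox{Support}(\langle i,j\rangle)$, $m_2 = \mbox{Support}(\langle k,l\rangle)$, and $u = s_{m_1}..s_j$. Because $\langle i,j\rangle$ and $\langle k,l\rangle$ are copies we have $m_1-i = m_2-k$, so $s_{m_2}..s_l$ coincides with $u$ character for character. Since $O_s^{[m,j]} = \mbox{fo}_s(m,j)$ depends only on the contents of $s_m..s_j$, this yields $O_s^{\langle i,j\rangle} = O_s^{\langle k,l\rangle}$; call this common word $w = \alpha_1\cdots\alpha_t$.

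Next I would identify the target path $z$. In $ST(s)$ the suffixes beginning at $m_1$ and $m_2$ both have $u$ as a prefix, so they pass through a common (possibly implicit) position $v$ at string depth $|u|$, and $m_1, m_2 \in \mbox{Suff}(v)$. I then follow the five steps of the $PT(s)$ construction on the root-to-$v$ portion. Every suffix going through $v$ begins with $\alpha_1 = s_{m_1}$, and by the definition of support $\alpha_1$ does not reappear inside $u$; hence the cut of step~1 strikes strictly below $v$ on each such path. Step~2 keeps exactly the first occurrences of $u$, namely $\alpha_1,\ldots,\alpha_t$, and rewrites the remaining characters to $\varepsilon$; step~4 explicitly realises the resulting single-character edges; step~5 collapses the $\varepsilon$-edges. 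Because every suffix through $v$ imposes the same rewriting on the shared portion, the root-to-$v$ piece becomes a single root-to-node path $z$ in $PT(s)$ with $w(z) = w$, whose endpoint is the image of $v$. None of these local rewritings can pull a leaf of the subtree rooted at $v$ in $ST(s)$ out of that subtree, so $m_1, m_2 \in \mbox{Suff}(z)$.

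The two conditions for $\langle i,j\rangle, \langle k,l\rangle \in \Phi(z)$ are then immediate: $w(z) = w = O_s^{\langle i,j\rangle} = O_s^{\langle k,l\rangle}$, and $m_1, m_2 \in \mbox{Suff}(z)$. The main obstacle is the second step: I have to argue that the potentially non-deterministic participation tree really fuses the paths of $m_1$ and $m_2$ into a single $z$ along their common prefix $u$, rather than producing two distinct paths $z_1, z_2$ carrying the same label. This reduces to checking that each of the five construction steps acts \emph{consistently} on any edge or node shared between several suffixes, which holds because every suffix through $v$ starts with the same $\alpha_1$ and reads the same characters of $u$ along the shared portion; once this consistency is established, the rest is direct unwinding of the definitions of $O_s$, $\mbox{Support}$ and $\Phi$.
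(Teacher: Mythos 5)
Your proposal is correct and follows essentially the same route as the paper's proof: take the supports $m_1,m_2$, observe that the copy relation forces $u=s_{m_1}..s_j=s_{m_2}..s_l$ and hence $O_s^{\langle i,j\rangle}=O_s^{\langle k,l\rangle}$, locate both suffixes below the locus of $u$ in $ST(s)$, and let $z$ be the image of that shared portion under the reduction, so that $w(z)$ equals the common $O_s$ value and $m_1,m_2\in\mbox{Suff}(z)$. Your additional discussion of why the reduction acts consistently on the shared root-to-$v$ portion only spells out a point the paper leaves implicit, so it is a more detailed version of the same argument rather than a different one.
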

\begin{preuve}
 Let $m_1=\mbox{Support}(\langle i,j \rangle)$ and
 $m_2=\mbox{Support}(\langle k,l \rangle)$.  As $s_i..s_j=s_k..s_l$, $u=
 s_{m_1}..s_j$ $=$ $s_{m_2}..s_l$ and $m_1$ and $m_2$ are thus in the
 subtree of the path $h$ labeled by $u$ in $ST(s).$ After reduction of
 this path in $PT(s)$, the resulting path $z$ is such that $w(z)=O_s^{
 \langle i,j \rangle}= O_s^{\langle k,l \rangle}$, so $m_1, m_2 \in
 \mbox{Suff}(z).$ Thus $\langle i,j \rangle, \langle
 k,l \rangle \in \Phi(z).$
\end{preuve}

\begin{theorem}
Any maximal location is contained in the image $\Phi(z)$ of some path $z$ in
$PT(s=s_1..s_n)$, and the size of $PT(s)$ (without the initial positions of suffixes)
is $O(|{\cal L}_{C}|)$.
\end{theorem}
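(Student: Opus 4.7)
The plan is to assemble the theorem directly from the four lemmas already established (\ref{lemmain}, \ref{phinonempty}, \ref{phinonover}, \ref{inthesame}), so almost no new argument is required — the work is just the bookkeeping of a surjection from $\mathcal{L}_C$ onto the non-root nodes of $PT(s)$.

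For the first part (every maximal location lies in some $\Phi(z)$), I would refine Lemma \ref{lemmain}. Given $\langle i,j\rangle\in\mathcal{L}$, let $m=\mbox{Support}(\langle i,j\rangle)$. The proof of Lemma \ref{lemmain} already observes that $O_s^{\langle i,j\rangle}$ is a proper prefix of $\mbox{lfo}_s(m)$ and is a permutation of $C_s(i,j)$. Now apply (the reasoning of) Lemma \ref{alllfo} constructively: the unique root-to-leaf path for the suffix starting at $m$ in $ST(s)$ reduces, under the construction of $PT(s)$, to a specific path $z_m$ from the root, and the prefix of $z_m$ of length $|C_s(i,j)|$ is a path $z$ in $\mbox{SPref}(s)$ with $w(z)=O_s^{\langle i,j\rangle}$ and $m\in\mbox{Suff}(z)$. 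Thus $\langle i,j\rangle\in\Phi(z)$.

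For the size bound, I would combine the three $\Phi$-lemmas as follows. Part 1 together with Lemma \ref{phinonover} defines a map $\psi:\mathcal{L}\to\mbox{SPref}(s)\setminus\{\text{empty path}\}$ by sending $\langle i,j\rangle$ to the unique non-empty path $z$ with $\langle i,j\rangle\in\Phi(z)$. Lemma \ref{inthesame} says $\psi$ is constant on each copy-class, so $\psi$ descends to $\bar\psi:\mathcal{L}_C\to\mbox{SPref}(s)\setminus\{\text{empty path}\}$. Lemma \ref{phinonempty} then says every non-empty $z$ has $\Phi(z)\neq\emptyset$, so picking any element of $\Phi(z)$ and taking its copy-class exhibits a preimage under $\bar\psi$; hence $\bar\psi$ is surjective. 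Consequently the number of non-empty root-paths in $PT(s)$ — which is exactly the number of non-root nodes — is at most $|\mathcal{L}_C|$. Since $PT(s)$ is a tree whose edges each carry a single character (step 4 of the construction), its total size is within a constant factor of its node count, giving $O(|\mathcal{L}_C|)$.

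There is no real obstacle: the lemmas were specifically engineered to yield this theorem. The only place requiring mild care is the first part, where one must exhibit an \emph{actual} path $z$ in $\mbox{SPref}(s)$ (not just a label) witnessing $\langle i,j\rangle\in\Phi(z)$; this is handled by tracking the suffix $\mbox{Support}(\langle i,j\rangle)$ through the $ST(s)\to PT(s)$ reduction, which pins down $z$ uniquely among the potentially several paths of $PT(s)$ carrying the label $O_s^{\langle i,j\rangle}$.
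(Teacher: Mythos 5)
Your proposal is correct and follows essentially the same route as the paper: it combines Lemmas \ref{lemmain}, \ref{phinonempty}, \ref{phinonover} and \ref{inthesame}, with your surjection $\bar\psi:\mathcal{L}_C\to\mbox{SPref}(s)\setminus\{\text{empty path}\}$ being just a restatement of the paper's partition/subpartition argument. Your extra care in part 1 (tracking the suffix $\mbox{Support}(\langle i,j\rangle)$ through the $ST(s)\to PT(s)$ reduction to pin down a concrete $z$ with $m\in\mbox{Suff}(z)$) simply makes explicit what the paper compresses into ``Lemma \ref{lemmain} directly implies,'' and is indeed how the proof of Lemma \ref{lemmain} obtains its path.
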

\begin{preuve}
Lemma \ref{lemmain} directly implies that all maximal locations are in the
image $\Phi(z)$ of a path $z$ in $PT(s)$. As by lemma \ref{phinonover}
the images $\Phi(z)$ are non-overlapping, they form a partition of
${\cal L}$. Lemma \ref{inthesame} ensures that ${\cal L}_C$ partition
is a subpartition of the partition formed by the images of $\Phi$. As by
lemma \ref{phinonempty} there is no empty image, the number of such
images is smaller than or equal to $|{\cal L}_C|.$
\end{preuve}

Note that we considered the size of $PT(s=s_1..s_n)$ without the
initial positions of suffixes (square boxes in Figure
\ref{parttree}). With these positions, the size of $PT(s)$ 
is $O(n+|{\cal L}_{C}|)$.

We explain below how to compute the participation tree from the
suffix tree in linear time.

\subsubsection{From Suffix Tree to Participation Tree}
\label{parttreeconstruct}
%% $s = a_1\; b_2\; a_3\; c_4\; e_5\; a_6\; b_7\; a_8\; c_9\; d_{10}$

We extend the notion of $\mbox{fo}_s(i,j)$ keeping the positions of
the characters in $s=s_1 .. s_n$.  We define $\mbox{efo}_s(i)$ as the
string formed by concatenating the first occurrences of each distinct
character touched when reading $s$ from position $i$ (included) to
position $n$ (included) but indexed by the position of this character
in the sequence.  For instance, if
$s = a_1b_2a_3c_4e_5a_6b_7a_8c_9d_{10}\#_{11}$,
$\mbox{efo}_s(3)= a_3c_4e_5b_7d_{10}\#_{11}$ and $\mbox{efo}_s(5) =
e_5a_6b_7c_0d_{10}\#_{11}.$

The idea of the algorithm is the following. For each transition
$(i,j)$ on the path of a longest suffix $v=s_k\ldots s_n$, we compute
the ``participation'' of the edge to $\mbox{lfo}_s(k)$, that is, the
new characters the edge brings in $\mbox{lfo}_s(k)$. For
instance, in Figure \ref{suffixtree} the participation of edge
$(6,8)=[5,11]$ is $e$, since it is on the path of the suffix
$s_3 \ldots s_n$ and $\mbox{lfo}_s(3) = ace.$ The participation of
edge $(12,14)=[5,11]$ is $eab$ since $\mbox{lfo}_s(4) = ceab.$ 

To compute the participation of interval $[i,j]$ on the path of a
 suffix $v=s_k\ldots s_n$, we use $\mbox{efo}_s(k)$ and also
the next position of $s_k$ after $k$ in $s$, if it exists. Assume it
is the case and let $p$ be this position. Thus $s_p=s_k$. Let
$\mbox{efo}_s(k)= s_k s_{l_1} s_{l_2} \ldots s_{l_z}$and $ l_h\leq p
\leq l_{h+1}$. If $i\geq p$, the participation of $[i,j]$ is the empty
word $\varepsilon$. Otherwise, if $i<p$ then the participation of $[i,j]$
is the string (potentially empty) $s_{l_a}\ldots s_{l_b}$ with
\begin{itemize}
\item $i \leq l_a$ and $l_a$ is the smallest such indice;
\item $l_b \leq min(j,p-1)$ and $l_b$ is the greatest such indice.  
\end{itemize}

[Note that this computation requires that the interval $[i,j]$ which
annotates a transition in the suffix tree corresponds to the suffix
$v$ used as reference. In order to ensure this, below we "shift" each
interval $[i,j]$ according to the suffix we are currently reading
before computing its participation.]

% As an example, we compute the particication of each edge in the suffix
% tree of Figure \ref{suffixtree}. We begin with  $\mbox{efo}_s(1)= a_1 b_2
% c_4 e_5 d_{10}$ (built in $O(\sigma\log\sigma)$).

%% $s = a_1b_2a_3c_4e_5a_6b_7a_8c_9d_{10}$

\begin{figure}[tb]
\begin{center}
\begin{algorithme}
{\sc Build\_part\_tree}($ST(s=s_1..s_ns_{n+1}$ with $s_{n+1} = \#)$)\\
\lign $\mbox{efo}_s(n)=s_n$ and $p_n = n+1$\\
\lign \For{$i = n..1$}\\
\lign  \> $\mbox{length} \leftarrow n$\\
\lign  \> $\mbox{Current} \leftarrow \mbox{Leaf}(i)$ in $ST(s)$.\\
\lign  \> \While{$\mbox{Current}$ not marked \OurAnd $\mbox{Current}\neq \mbox{Root}$}\\
\lign  \> \>  $\mbox{Prec} \leftarrow \mbox{Parent}(\mbox{Current})$ in $ST(s)$.\\
\lign  \> \> $[k,l] \leftarrow \mbox{ edge } (\mbox{Prec},\mbox{Current})$\\
\lign  \> \> $[\mbox{pos\_deb},\mbox{pos\_end}] \leftarrow [\mbox{length}-(l-k),\mbox{length}]$\\
\lign  \> \> Compute the participation of $[\mbox{pos\_deb},\mbox{pos\_end}]$ in  $\mbox{efo}_s(i)$\\
\lign  \> \> Mark $\mbox{Current}$\\ 
\lign  \> \> $\mbox{length} \leftarrow \mbox{length}-(l-k)-1$\\
\lign  \> \Endwhile\\
\lign  \> $\mbox{efo}_s(i-1)$ $\leftarrow$ Update  $\mbox{efo}_s(i)$\\
%%\lign  \> \> $p_{i-1}$ $\leftarrow$ previous position of $s_{i-1}$ in the representation of $\mbox{efo}_s(i)$\\
\lign \Endfor\\
\lign Replace each terminal character of all paths from the root by $\varepsilon$.\\
\lign Remove $\varepsilon$ edges by node merging.
\end{algorithme}
\end{center}
\vspace{-0.5cm}
\caption{Building the participation tree from the suffix tree.}
\label{algo:parttree}
\vspace*{-0.5cm}
\end{figure}

%% \begin{enumerate}
%% \item $p=4$. We process suffix 1 bottom-up in the suffix tree. $\mbox{Part}((3,5)=[3,11] = c)$, $\mbox{Part}((2,3)=[2,2] = b)$,  
%% $\mbox{Part}((1,2)=[1,1] = a)$. We then update $\mbox{efo}_s(2)= b_2
%% c_3 a_4 d_5 e_{11}$ in $O(\log\sigma)$ time.
%% \item $p=7$. We process suffix 2 bottom-up in the suffix tree. $\mbox{Part}((8,10)=[3,11] = c_3 a_4 d_5 = cad)$,  $\mbox{Part}((1,8)=[2,2] = b_2 = b)$.   
%% We then update $\mbox{efo}_s(3)= c_3 a_4 d_5 b_7 e_{11}$ in $O(\log\sigma)$ time.
%% \item $p = 9$. We process suffix 3 bottom-up in the suffix tree.   $\mbox{Part}((12,13)=[4,11] =  a_4 d_5 b_7 = adb)$, $\mbox{Part}((1,12)=[3,3] = c_3 = c)$. We then update $\mbox{efo}_s(4)= a_4 d_5 b_7 c_9 e_{11}$ in $O(\log\sigma)$ time.
%% \item  ... 
%% \end{enumerate}

For instance, in Figure \ref{suffixtree}, $\mbox{efo}_s(2)= b_2 a_3
c_4 e_5 d_{10} \#_{11}$ and $p=7$ since $7$ is the next position of
$b$ after position $2$. Thus, participation of edge $(1,9)=[2,4] = b_2
a_3 c_4 = bac$, participation of $(9,11)=[5,11] = e_5 = e$ (since
$p=7$). For each suffix $[k,n]$, given $\mbox{efo}_s(k)$ and $p$, a
bottom-up process from leaf $k$ to the root of the suffix tree allows
us to: 
\begin{itemize}
\item[(a)] shift the pointed positions to positions corresponding to the suffix
considered. The bottom-up approach allows to read the suffix from its
end, and thus the sizes of the encountered transitions are enough to
know which segment of the suffix the edge represents;
\item[(b)] compute the participation of each (not previously touched)
edge on this path. Also, the bottom-up approach allows us to avoid
unnecessary computation, since the participation of an upper edge ends
in $\mbox{efo}_s(k)$ where the participation of the lower begins.
\end{itemize}

We modify the suffix tree using successive $\mbox{efo}_s(k)$, for
$k=n..1$. A sketch of this algorithm is given in Figure
\ref{algo:parttree}. At the end of this process, we first replace the
terminal character of all paths from the root by $\varepsilon.$ We finally
remove all $(u,\varepsilon,v)$ edges by merging $u$ and $v.$

\begin{theorem}
\label{theo:build_part_tree}
The participation tree of $s=s_1..s_n$ can be built in $O(n+|{\mathcal
L}_{C}|)$ time and $O((n+|{\mathcal L}_{C}|)\log n)$ bits of space.
\end{theorem}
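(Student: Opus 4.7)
The plan is to implement and analyze the algorithm sketched in Figure~\ref{algo:parttree}. First, I would build the suffix tree $ST(s)$ in $O(n)$ time using Farach's construction, which occupies $O(n\log n)$ bits of space. Second, I would process the suffixes in the order $i=n,n-1,\ldots,1$: for each $i$, walk bottom-up from the leaf corresponding to suffix $i$ towards the root, stopping as soon as a node already marked by a previous iteration is encountered; for every edge $[k,l]$ crossed, write its participation relative to $\mbox{lfo}_s(i)$ and mark the bottom endpoint. The participation tree is finally obtained by the two contraction steps at the end of the algorithm (replacing terminal characters of every root-to-leaf path by $\varepsilon$ and merging $\varepsilon$-edges by identifying their endpoints), both of which take $O(n+|{\mathcal L}_C|)$ time once the participations have been written on the edges.

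To make each edge treatment run in constant time plus the length of its participation, I would maintain $\mbox{efo}_s(i)$ as a doubly linked list of pairs $(c,\mathit{pos})$ ordered by position, augmented with an array indexed by $\Sigma$ that, for each character already present in the list, points to its current list node. The transition from $\mbox{efo}_s(i)$ to $\mbox{efo}_s(i-1)$ then reduces to at most one deletion (of the old occurrence of $s_{i-1}$, if any) followed by a prepend of the pair $(s_{i-1},i-1)$, both performed in $O(1)$. While processing suffix $i$, I would also keep a cursor into this list that sweeps leftwards as we travel upward along the suffix tree: the edges on the path from leaf $i$ to the root induce a partition of $[i,n]$ into consecutive intervals, and the participation of an edge $[k,l]$ is exactly the contiguous block of list entries whose positions lie inside the shifted interval $[\mathit{pos\_deb},\mathit{pos\_end}]$, so the cursor traverses each entry of the list exactly once for suffix $i$ with only $O(1)$ overhead per edge boundary.

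The running time then follows by an aggregate count. Because of the marking, every internal node and every leaf of $ST(s)$ is visited by the outer loop at most once, so the total traversal work is $O(n)$. Each character emitted as part of a participation either survives as an edge of $PT(s)$ or is discarded by the two final contractions, the latter accounting for at most $n$ terminal characters; since by the preceding size theorem $PT(s)$ has $O(|{\mathcal L}_C|)$ edges, the total number of characters emitted is $O(n+|{\mathcal L}_C|)$. Combined with the $O(n)$ cost of all efo updates and the $O(n+|{\mathcal L}_C|)$ cost of the two contractions, this yields the claimed $O(n+|{\mathcal L}_C|)$ running time. For the space, $ST(s)$ occupies $O(n\log n)$ bits, the partial participation tree uses $O((n+|{\mathcal L}_C|)\log n)$ bits, and the list and character-index array fit within $O(n\log n)$ bits, so the total is $O((n+|{\mathcal L}_C|)\log n)$ bits as required.

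The main subtlety is the amortized accounting of the participation work. A naive implementation could, for every suffix, recompute participations already output in the context of deeper suffixes and blow up the bound well past $O(n+|{\mathcal L}_C|)$. The marking of nodes is precisely what guarantees that each edge of $ST(s)$ is processed in the context of a single suffix, so that the total output across all iterations equals the number of edges eventually retained in $PT(s)$ plus the at most $n$ terminal characters that end up being contracted; once this is established, the announced complexity follows directly from the participation-tree size bound proved earlier in the section.
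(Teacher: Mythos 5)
Your proposal is correct and follows essentially the same route as the paper's proof: build $ST(s)$ in linear time, sweep the suffixes from $n$ down to $1$ while maintaining $\mbox{efo}_s(i)$ as a doubly linked list plus a $\Sigma$-indexed array updated in $O(1)$, stop each upward walk at marked nodes, and amortize by charging every emitted participation character either to an edge of $PT(s)$ (hence $O(|{\mathcal L}_C|)$ in total) or to one of the at most $n$ terminal characters removed by the final contractions. The only detail worth making explicit is the paper's pointer $tp_i$ to the last list entry whose position precedes the next occurrence $p$ of $s_i$ (maintained in $O(1)$ during the efo update): your leftward cursor must start there, so that each participation is truncated to $\mbox{lfo}_s(i)$ rather than taken from all of $\mbox{efo}_s(i)$, which is what keeps both the output and the accounting within the claimed bounds.
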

\begin{preuve}
The algorithm is correct since it consists of the direct
computation of the participation of each edge one after the other. We now
study its complexity.

For each suffix $[k,n]$, given $\mbox{efo}_s(k)$  and
$p$, the bottom-up process from leaf $k$ to the root of the suffix
tree can be done in $O(1)$ time for each unmarked node. 

We maintain each $\mbox{efo}_s(i)$ as a combination of a doubly linked
list and an array of size $\Sigma$ in which each cell $j$ points to
the position of character $f^{-1}_{\Sigma}(j)$ in the doubly linked
list. Thus, adding a character $c$ to the head of the doubly linked
list while recording its position in the corresponding cell of the
array is $O(1)$. Removing a character out of the list is also $O(1)$
since it suffices to find its position in the list using the array and
remove the character using the pointer to the previous and
next character in the list. Initializing the structure is $O(\sigma)$
but it has only to be done once. In addition to the array and the doubly
linked list, a pointer $tp$ points to the character in the list whose
position is just before $p$ (the next position of $s_i$ in $s$) if
such character exists or to the end of the list otherwise. An instance of this
structure is given in Figure \ref{efos}.

\begin{figure}[htb]
  \centering
 \includegraphics[width=6cm]{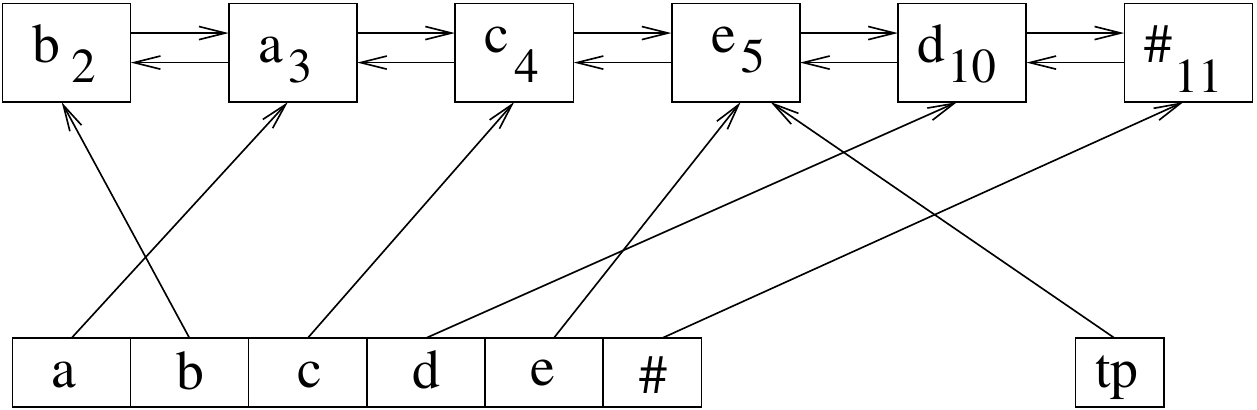}
 \caption{Data structure for maintaining $\mbox{efo}(i)$ shown on
$\mbox{efo}_s(2)= b_2 a_3 c_4 e_5 d_{10} \#_{11}.$ The pointer $tp$
points to the character in the list whose position is the largest
smaller position in the list compared to the next position $p$ of $b$ in
$s$, which is $7.$}
  \label{efos}
 \end{figure}

Assume that $\mbox{efo}_s(i)$ is represented in this way, with knowing $tp_i$,
the next position in the doubly linked list of the first character $s_{i-1}$
in $\mbox{efo}_s(i)$. To compute $\mbox{efo}_s(i-1)$ and $tp_{i-1}$, it
suffices to test in the array if $\alpha = s_{i-1}$ already appears in the
list. If yes, $tp_{i-1}$ points to the character just before $\alpha$ in the list,
if not $tp_{i-1}$ is set to the end of the list. Then $\alpha$ is
removed out of the list and inserted at its head. The first
$\mbox{efo}_s(n)$ is simply $s_n$, and $tp$ points to the end of the
list.

Computing the participation of each non-touched edge on a path
from the root to a leaf corresponding to suffix $i$ in a bottom-up
manner is not expensive since it suffices to ``consume''
$\mbox{efo}_s(i)$ backward from $tp_i$ edge after edge as soon as an
edge $[k,l]$ (shifted to correspond to suffix $i$) is such that $k$ is
less than the position of the element pointed by $tp_i$.  Thus,
calculating the participation of each edge in the suffix tree can be
done in a time proportional to the participation of the edge in
$PT(s)$ tree plus the total number of edges in the tree.

Replacing the terminal character of each path from the root by
$\varepsilon$ is $O(n).$ Merging each of the $\varepsilon$ edges can
also be performed in $O(n)$ since each such $\varepsilon$ edge is
either a previous edge of the suffix tree or was labeled by a single
terminal character of a path from the root. The whole construction of
$PT(s)$ is thus $O(n + |{\mathcal L}_{C}|)$ time.

The space required is the size of the suffix tree plus the size of the
participation tree plus the size of the data structure representing
$\mbox{efo}_s(i)$, thus $O(n+|{\mathcal L}_{C}|)$ space.
\end{preuve}

\vspace*{0.5cm}

\noindent
We now explain how to name all fingerprints from the
participation tree. 

\subsubsection{Naming a Participation Tree}
\label{newnaming}

%Naming the tree in $O(|{\cal L}_{T}|\log\sigma).$ 

\begin{figure}[tb]
\begin{center}
\begin{algorithme}
{\sc Depth\_first\_search}($FT_k$,$Current$)\\
\lign \For{all $\alpha$ such that $\delta(Current,\alpha)\not= \Theta$}\\ 
\lign  \> $ q \leftarrow \delta(Current,\alpha)$\\
\lign  \>  $\{[a],j\} \leftarrow \Delta(Current,\alpha,q)$\\
\lign  \>  $prec \leftarrow FT_k[j]$\\
\lign   \> $FT_k[j] \leftarrow [a]$\\
\lign   \> $\Delta(Current,\alpha,q) \leftarrow \{(FT_k[2 \lfloor j/2 \rfloor], FT_k[2
  \lfloor j/2 \rfloor+1]),\lfloor j/2\rfloor \}$\\
\lign   \> {\sc Depth\_first\_search}($FT_k$,$q$)\\
\lign   \>  $FT_k[j] \leftarrow prec$\\
\lign  \Endfor\\\\
{\sc Name\_fingerprint}($PT(s)$)\\
\lign $ninit_1 \leftarrow [0]$\\
\lign \For{$k = 1..\log \sigma$}\\
\lign  \>    $FT_k \leftarrow$ name table of size $\sigma/2^{k-1}$
all initialized to $ninit_k$\\
\lign \> {\sc Depth\_first\_search}($FT_k$,$\mbox{Root}(PT(s))$)\\
\lign \> $Sl \leftarrow \Theta$~~~ /* empty stack */\\
\lign \> \For{ all edges $e=(p,\alpha,q)$ in $PT(s)$}\\
\lign \>   \> $\{(n_1,n_2),j\} \leftarrow \Delta(p,\alpha,q)$\\ 
\lign  \>  \> Add $(n_1,n_2)$ to $Sl$.\\
\lign  \> \Endfor\\
\lign  \>  add the couple $(ninit_k,ninit_k)$ to $Sl$\\
\lign  \>  sort $Sl$ in lexicographical order\\
\lign  \>  give new names for each different couple in $Sl$\\
\lign   \> replacing each pair in $\Delta(p,\alpha,q)$ by its new name\\
\lign  \>   $ninit_{k+1} \leftarrow $ name of the pair $(ninit_k,ninit_k)$\\
\lign \Endfor
\end{algorithme}
\end{center}
\vspace{-0.5cm}
\caption{Naming all fingerprints in a participation tree $PT(s)$.}
\vspace*{-0.5cm}
\label{algo:namealltwin}
\end{figure}

The naming approach of the previous section has been modified in
\cite{Ours2006} to name on the same set of names a table of lists of
fingerprint changes. The main modification is that the linear sorting
is done for each level on all the pairs of all the lists of the table.
We use a similar approach, but instead of a table of lists we consider
the set of all paths from the root in the participation tree
$PT(s)$. Each such path is considered as a list of fingerprint
changes, except that the initialization of the naming list is done
once for all paths. Corollary \ref{allfingerprintstree} guarantees
our approach. The {\sc Name\_fingerprint} algorithm names all
fingerprints. Its pseudo-code is given in Figure
\ref{algo:namealltwin}.

As in the list naming of section \ref{subchanges}, $\log \sigma$
iterations are performed, one by fingerprint array level (loop 11-24),
the lowest one excepted. With each edge $(p,\alpha,q)$ of $PT(s)$ a value
$\Delta(p,\alpha,q)$ is associated. At the end of iteration $k$,
this value records the change corresponding to the edge in the
fingerprint array of level $k+1$. The value $\Delta(p,\alpha,q)$ is
assumed to be initialized with $\{[1],f_{\Sigma}(\alpha)\}$
corresponding to the change induced by the edge at the lowest level
$1$.

In each iteration $k$, the recursive algorithm {\sc
  Depth\_first\_search} is called (line 13) on the participation tree
to update all values $\Delta(p,\alpha,q)$ during a depth first
search. The update operation on each such value is similar to the pair
update in the naming of a simple list of fingerprint changes in
section \ref{subchanges}. Note that in {\sc Depth\_first\_search} a
special $FT$ table is modified (line 5) before the recursive call but
reinitialized to the previous value after the call (line 8). This
permits to initialize the table $FT$ only once before the first call
to {\sc Depth\_first\_search} (line 12) and thus the initialization
costs are the same for all paths as for a single list, and thus are
bounded by $2\sigma$.

After the depth first search the values $\Delta(p,\alpha,q)$ are
collected on all the edges $(p,\alpha,q)$ of the participation tree (lines
14-18) in a list $Sl$. This list is lexicographically sorted and a new name is given
to each unique pair (line 20), similarly to the naming of a single
list in section \ref{subchanges}. The initial pair of names of each
$\Delta(p,\alpha,q)$ is then replaced by its new name.

To initialize the fingerprint array at the next level, the couple
$(ninit_k,ninit_k)$ is added to the list of names (line 19) and its new
name is retrieved after the sorting and the renaming (line 22).

\begin{theorem}
\label{theo:build_names}
The {\sc Name\_fingerprint} algorithm applied on $PT(s)$ names all
fingerprints of $s$ in $\Theta(\sigma+|{\cal L}_{C}|\log\sigma)$ time 
using $O((|{\cal L}_{C}|+|{\cal F}|\log\sigma)\log n)$ bits of working space.
\end{theorem}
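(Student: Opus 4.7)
The plan is to verify three aspects of the statement in turn: correctness of the names produced, the time bound, and the working-space bound.

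For correctness, the key observation is that the algorithm is morally the list-naming scheme of Section~\ref{subchanges} run simultaneously on every root-to-node path of $PT(s)$, sharing a single table $FT_k$ across paths. I would first establish an invariant on {\sc Depth\_first\_search}: because the algorithm saves $FT_k[j]$ into $prec$ before recursing and restores it on backtracking, whenever the DFS is at a node $q$ the contents of $FT_k$ coincide with the state one would obtain by running the single-path naming algorithm on the path from the root to $q$ alone. Together with Corollary~\ref{allfingerprintstree}, which guarantees that every fingerprint of $s$ is the character set of some root-to-node path of $PT(s)$, this shows that within each level the updates written into $\Delta(p,\alpha,q)$ are exactly the pairs that the single-list algorithm would produce. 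Iterating across the $\log\sigma$ levels and invoking the correctness of the single-list naming shows that paths whose character sets coincide end up with identical pair sequences and hence identical final names, while two paths whose fingerprints differ are distinguished at the lowest level where their $\{[0],[1]\}$-vectors disagree. The extra $(ninit_k,ninit_k)$ entry simply propagates the name of the all-zero cell upward so that subsequent levels are correctly initialized.

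For the time complexity, each of the $\log\sigma$ iterations performs: (i) an initialization of $FT_k$ costing $\sigma/2^{k-1}$; (ii) a depth-first traversal of $PT(s)$ doing $O(1)$ work per edge in the unit-cost RAM model (names fit in one machine word); (iii) a scan that collects the $\Delta(p,\alpha,q)$ values into a list $Sl$ of size $O(|{\cal L}_C|)$; (iv) a radix sort of $Sl$ in $O(|{\cal L}_C|)$ time; and (v) a linear pass that rewrites each $\Delta(p,\alpha,q)$ with its new name. The initialization telescopes to $\Theta(\sigma)$ over all levels, while the remaining work is $O(|{\cal L}_C|)$ per level, summing to $O(|{\cal L}_C|\log\sigma)$. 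This gives the claimed $\Theta(\sigma+|{\cal L}_C|\log\sigma)$ bound.

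For the working space I would account for three contributions, measured in bits: (a) $PT(s)$ together with the $\Delta$-values on its edges occupies $O(|{\cal L}_C|\log n)$ bits; (b) the current table $FT_k$ plus the dictionary assigning a fresh name to each distinct sorted pair contain only distinct names actually appearing at that level, and the number of such distinct names at level $k$ is at most $|{\cal F}|$ since each name corresponds to a distinct cell-restriction of some fingerprint of ${\cal F}$, contributing $O(|{\cal F}|\log\sigma\log n)$ bits over the $\log\sigma$ levels; (c) the list $Sl$ and radix-sort buckets for the current level are discarded before moving to the next level, so they are subsumed by the previous terms. Summing gives $O((|{\cal L}_C|+|{\cal F}|\log\sigma)\log n)$ bits.

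The main obstacle is the DFS invariant in the correctness step: one must be careful that sharing $FT_k$ across sibling subtrees of $PT(s)$, with save/restore around each recursive descent, genuinely reproduces the output that independent per-path executions would yield. Once that invariant is nailed down, the time analysis reduces to a straightforward telescoping argument and the space analysis to the observation that at level $k$ the number of distinct names is bounded by $|{\cal F}|$.
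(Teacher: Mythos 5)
Your proposal follows essentially the same route as the paper, which in fact states Theorem~\ref{theo:build_names} without a separate proof and relies on exactly the ingredients you spell out: the save/restore discipline in {\sc Depth\_first\_search} making $FT_k$ at each node equal to the single-path state, Corollary~\ref{allfingerprintstree} to guarantee that every fingerprint is named, the $2\sigma$ telescoping initialization cost, and one radix sort of the $O(|\mathcal{L}_C|)$ collected pairs per level, giving $\Theta(\sigma+|\mathcal{L}_C|\log\sigma)$ time. The one step you should tighten is the space claim that ``the number of distinct names at level $k$ is at most $|\mathcal{F}|$ since each name corresponds to a distinct cell-restriction of some fingerprint'': as stated this does not give an injection into $\mathcal{F}$, because a single fingerprint has $\sigma/2^{k-1}$ cell-restrictions at level $k$ and the same fingerprint can be reached along different $PT(s)$ paths whose last characters lie in different cells, so a fingerprint may be charged several names at one level. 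The bound in the theorem does not actually need that claim: the per-level structures ($FT_k$, $Sl$, and the pair-to-name dictionary of the current level) occupy $O((\sigma+|\mathcal{L}_C|)\log n)$ bits and are discarded before the next level, $\sigma\le|\mathcal{F}|$ since every character of $s$ is a singleton fingerprint, and the only persistent data are the $\Delta$-annotated edges of $PT(s)$, i.e.\ $O(|\mathcal{L}_C|\log n)$ bits, plus at most $O(|\mathcal{F}|\log\sigma)$ names if one keeps the level dictionaries for later use; either way the total is $O((|\mathcal{L}_C|+|\mathcal{F}|\log\sigma)\log n)$ bits, so your conclusion stands once that sentence is repaired.
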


\section{A Space Efficient Fingerprint Representation}
\label{space}
\subsection{Overview}
In this section we show how the fingerprint set can be represented
in just $|{\cal F}|(2\log \sigma+\log_2 e)(1+o(1))$ bits of space instead of
$O(|{\cal F}|\log n)$ bits. Our solution is particularly attractive whenever
$\sigma$ is sufficiently small (e.g. $\log \sigma=o(\log n)$) as it saves a factor
$\Theta(\frac{\log n}{\log\sigma})$ compared with a standard non-succinct representation
that uses at least $\Theta(|{\cal F}|)$ words of space, which translates into 
$\Theta(|{\cal F}|\log n)$ bits. 

Our representation relies on the fingerprint trie as described in section \ref{subsubsec:finger_trie}.

Before describing our solution, we first recall some basic facts on the fingerprint trie that will be needed to understand our solution. First, recall the following two facts:
\begin{enumerate}
\item Each node in the trie corresponds to a unique set and each set
corresponds to a unique node.
\item Each prefix of a fingerprint is also a fingerprint. 
\end{enumerate}

%First notice that each set (fingerprint) 
%is mapped to a distinct node of the trie and each node corresponds
%uniquely to a set. 
Note also that the fingerprint trie implies an ordering on the 
characters of any given fingerprint represented in the trie.
More precisely for a given node $q$,
the characters of the corresponding fingerprint $f_q$ are
ordered according to the order in which they appear as labels of the nodes 
in the path from the root to the node $q$. 

In our representation, the fingerprint trie will be represented in two different ways. This is why the space usage will be at least $2|{\cal F}|\log\sigma$ bits. The first representation will permit a traversal of the fingerprint trie bottom-up (climb the trie) and the second one will permit a traversal of the fingerprint trie top-down (descend the trie). If the fingerprint is represented in the trie, then a bottom-up traversal will permit one to get the proper ordering on the fingerprint characters. Then, the presence of the fingerprint can be confirmed by a top-down traversal. Note that this second traversal can only return true if the fingerprint exists and is in the correct order represented in the trie.  Therefore a top-down traversal will never return a false positive answer (it will never return true for a fingerprint not represented in the trie or for fingerprint represented in the trie but with a different ordering). Likewise, this top-down traversal will never return a false negative (it will always give a positive answer for an existing fingerprint) as it will be proven later that a bottom-up traversal will always return the correct ordering of the characters of an existing fingerprint and this correct ordering will thus be used to do a successful top-down traversal of the trie.  

We now give more details on our representation. First, notice that each set (fingerprint) uniquely corresponds to a distinct node of the fingerprint trie. Let $f_q$ denote the fingerprint associated with the node $q$. Let $\alpha(q_1,q_2)$ denote the characters that label the edge which connects a node $q_1$ to its child $q_2$. Notice that by definition of the fingerprint trie for any node $q_2$ having a parent $q_1$, we have $f_{q_2}=f_{q_1}\cup \{\alpha(q_1,q_2)\}$. That is, the fingerprint of the node $q_2$ is obtained by adding one character $\alpha(q_1,q_2)$ to the fingerprint of its parent node $q_1$. 
\\
The solutions we propose are able to find whether a given query fingerprint $f$ is in the set ${\cal F}$ in $O(|f|)$ time. A query for a fingerprint $f$ represented by a string which contains all the characters of $f$ in an arbitrary order will work in three steps:
\begin{enumerate}
\item We query the bottom-up representation of the trie, which, when given the fingerprint $f$, returns a string $s$ of length $|f|$. This bottom-up representation relies on the use of succinct function representation of lemma~\ref{lemma:succ_fn_rep}. A detailed description of the step is in section~\ref{subsec:backtrack_fn}.

\item We check whether the string $s$ is a permutation of the set $f$. That is, we check whether $s[i]\in f$ for each $i\in [0,|f|-1]$ and check also that all characters of $s$ are distinct. This step is done in time $O(|f|)$ with high probability using $O(|f|\log\sigma)$ bits working space or in deterministic time $O(\epsilon|f|)$ using working space $O(\sigma^{1/\epsilon}\log\sigma)$ bits for any positive integer $\epsilon$. A detailed description of the step is in section~\ref{subsec:succtrie}.
\item The final step is using the succinct top-down representation of the trie to do a top-down traversal for the string $s$. This step permits checking whether the string $s$ exists in the trie representation in $O(|s|)=O(|f|)$ time. Notice that this is equivalent to checking that $f\in {\cal F}$. This is the case as by previous step we have checked that $s$ is a permutation of $f$ and we know that the trie stores a unique string corresponding to each fingerprint in ${\cal F}$. A detailed description of the step is in section~\ref{subsec:equal_test}. 
\end{enumerate}

In the following three subsections we describe in more detail the data structures used for each
of the three steps. In subsection~\ref{subsec:fullquery} we give the full picture of the query
and prove its correctness. 

\subsection{Backtracking Function (bottom-up trie representation)}
\label{subsec:backtrack_fn}

The first step is achieved through a data structure we call the
 \emph{backtracking function}, which is in fact a bottom-up representation
 of the trie. This function associates to each fingerprint
 $f_i$ the last character in its string representation $s_i$. 
We will simply use a static function that maps each
set (fingerprint) to the last character in the character ordering.
In other words whenever we have a fingerprint $f$ corresponding to a node $q$ 
in the fingerprint trie, we associate with $f$ the character which labels the edge 
which connects $q$ to its parent in the trie.
That is, for each set we have a string representation
that contains exactly the same characters as the set in a certain order. With each set we associate the last character in its
string representation. 

 It turns  out that representing this backtracking function can be done using
 just $(|{\cal F}|\log \sigma)(1+o(1)))$ bits of space which is optimal. The
 generation of the backtracking function from the set ${\cal F}$ can be done
 in optimal $O(|{\cal F}|)$ time. The generation is based on the
 use of a polynomial hash function (the same used in the so-called Rabin-Karp
fingerprints~\cite{KR87}). The first step consists in a
 top-down traversal of the fingerprint trie. Recall that each node
 represents a distinct fingerprint. Given a node $q$ with a parent
 $p$, we note the fingerprint associated with $p$ by $f_p$ and the
 fingerprint associated with $q$ by $f_q$. Then, if the edge which
 connects $p$ to $q$ is labeled by character $\alpha$, we will have
 $f_q=f_p\cup \{\alpha\}$. So, during the top-down traversal of the trie we
 will compute a hash value associated with each fingerprint. For
 that we will make use of the polynomial hash functions family as
 described in section~\ref{subsec:poly_hash}. More precisely, the hash
 functions we will use are polynomials modulo a prime $P$ chosen such
 that $P\in[|{\cal F}|^2\sigma,2|{\cal F}|^2\sigma]$. Finding $P$ takes 
time $O((\log(|{\cal F}|^2\sigma))^c)=O((\log(|{\cal F}|+\log\sigma))^c)$ for some constant $c$. 
(see ~\ref{subsec:poly_hash} for details on the algorithm used to find $P$). 

Before beginning the top-down traversal of the
 trie, we will randomly choose a number $r$ from the interval
 $[0,P-1]$. For any fingerprint $f_i$ having elements
 $\alpha_1,\alpha_1,\ldots,\alpha_{|f|}$, we will associate the
 hash value computed using the formula
 $H(f_i)=r^{f_{\Sigma}(\alpha_1)}+r^{f_{\Sigma}(\alpha_2)}+\ldots+r^{f_{\Sigma}(|f|)}$ where multiplications and
 additions are all done modulo $P$.  \\Now the generation of the hash
 values for all fingerprints is done in the following way: we first
 associate the hash value $0$ with the root node which does not
 represent any fingerprint. We note by $H_q$ the hash value associated
 with the node $q$ and by $H_p$ the hash value associated with node
 $p$. From the definition it is evident that $H_q=H_p+r^{f_{\Sigma}(\alpha)}$ where $\alpha$
 is the character which labels the edge connecting node $p$ to node
 $q$. Therefore, during a top-down traversal of the trie, we can compute the
 hash value for each fingerprint in constant time given the
 fingerprint of its parent node.  Once we have generated the $|{\cal F}|$
 hash values corresponding to the $|{\cal F}|$ fingerprints, we will check
 whether all fingerprints are distinct. According to lemma~\ref{lemma:poly_hash_lemma}
we deduce  that this is the case with probability of at least $1/2$. 
 If this is not the case, we will choose a
 new value $r$ and recompute the hash values in the same way during a
 top-down traversal of the trie. As on expectation we will do $O(1)$
 trials and each trial taking time $O(|{\cal F}|)$, we deduce that 
the total expected time is $O(|{\cal F}|)$.
 \\ Once we have successfully mapped
 all the keys to distinct hash values in range $[0,P-1]$, we will
 store a static function using lemma~\ref{lemma:succ_fn_rep} which for each fingerprint $f_i$ will
 associate the character $f_{\Sigma}(\alpha_i)$ (where $\alpha_i$ is the last character in
 $f_i$) to the hash value $H(f_i)$. The space used by the static function will clearly be $|{\cal F}|(\log \sigma)(1+o(1))$ bits.
%For storing the static hash
 %function, we could use perfect hashing. However, though this would
 %permit a constant query time, the space usage would be
 %$|{\cal F}|(\log \sigma+o(1))$. In order to obtain an optimal space usage
% we will  using matrix solving (A
% similar method to~\cite{P09} is also described in papers
% ~\cite{CC08,DP08}).
\subsection{Deterministic and Probabilistic Set Equality Testing}
\label{subsec:equal_test}
We now describe a method to test for set equality. This is step 2 in our 
query algorithm. Given two strings $s_1$ and $s_2$
where $|s_1|=|s_2|$, we would wish to test whether the two strings are
permutations of the same set\footnote{To declare that two strings are equal we require that 
the two strings are permutations. That is, the characters of each string are all distinct.}. 
That is, we are asking if we can obtain
the string $s_1$ by doing a permutation on the characters of the
string $s_2$. We propose two solutions for this problem. The first one
is randomized while the second one is deterministic. The two solutions are folklore, 
but we describe them here for completeness. 

\paragraph{Randomized Method}

The randomized method works in the following way : we use a dynamic
perfect hash table~\cite{DKMHRT88}(or any other efficient hash table 
implementation) in which we insert all the characters of the string
$s_1$. This takes time $O(|s_1|)$ with high probability and uses space $O(|s_1|\log\sigma)$ bits
\footnote{A linear-space hash table needs $O(\log |U|)$ per element where $U$ is the universe.
In our case $U=\Sigma$ and thus $|U|=\sigma$.}. 

During the insertion, we can easily check that the characters of $|s_1|$ are all distinct by checking that 
every character of $s_1$ is not present in the table at the time of its insertion. 
In the hash table, we associate a bit with each key and we initialize the bit to
zero. Now, we process the string $s_2$. For each character $\alpha$ of
$s_2$ we query the perfect hash table for the character $\alpha$. In case
we do find it, we mark the bit associated with it. After we have
processed all characters of $s_2$, we check if all the bits
associated with characters of $s_1$ are now set to one. If this is the
case, we conclude that $s_2$ and $s_1$ are permutations of the same
set.

Clearly this randomized method uses $O(|s_1|)$ words of space that is
$O(|s_1|\log\sigma)$ bits of space, which is optimal up to a
constant-factor, as we also need $|s_1|\log\sigma$ bits to represent
$|s_1|$.

\paragraph{Deterministic Method}

We now describe a deterministic method which can be used to do equality testing. The basic method needs
$\sigma$ bits of working space for queries and checks set equality
in optimal time $O(|s_1|)$. A more sophisticated method could use space
$O(\sigma^{1/k}\log\sigma+|s_1|\log\sigma)$ bits and answers set equality in time $O(k|s_1|)$ for
any integer $k$ such that $k>1$.  In the basic method, we will simply
use a bitvector $B$ of $\sigma$ bits. At the beginning all the bits in $B$ 
are set to zero, and we require that they are reset to zero after each equality 
test. 

The equality test works in the following way: we first process the string
$s_1$. For each $i$ in $[0,|s|-1]$, we set $c=f_{\Sigma}(s_1[i])$ and then set $B[c]=1$.
Before setting $B[c]=1$, we check that $B[c]\neq 0$ and thus that the character $s_1[i]$ 
does not occur twice in $s_1$. 

We now traverse the string $s_2$. For each $i$ in $[0,|s|-1]$, 
we set $c=f_{\Sigma}(s_1[i])$ and check that $B[c]=1$. If this was the case, then we set $B[c]=0$, 
otherwise, we declare that $s_1$ and $s_2$ are two distinct strings. Setting $B[c]$ to zero
is necessary to ensure that all the characters of $s_2$ are all distinct. 

It is easy to see that the above procedure correctly computes the equality of $s_1$ and 
$s_2$. In the first phase we have set all the $|s_1|$ distinct bits corresponding to
characters of $s_1$. In the second phase, we check that the bits corresponding to characters
of $s_2$ are all distinct and all set which can only be the case if those bits are precisely
the $|s_1|$ bits corresponding to character of $|s_1|$.

At the end of checking, if the two strings are equal, then all the bits of $B$ are set to zero, so that 
$B$ is ready for the next query. If the two strings are not equal, then we need to 
traverse the string $s_1$ and clear the bits of $B$ which were 
set to one when $s_1$ was first traversed (we set $B[c]=0$ for every $c=f_{\Sigma}(s_1[i])$)
%Otherwise, set declare that character $c$ is not in
%$s_1$. 

\begin{lemma}
\label{lemma:simple_det_equ_test}
We can do equality testing between two strings $s_1$ and $s_2$ over an alphabet of size $\sigma$
in time $O(|s_1|)$ using $\sigma$ bits of working space. 
\end{lemma}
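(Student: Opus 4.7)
The plan is to formalize the bitvector-based procedure already sketched in the paragraphs preceding the lemma statement, and verify that it meets both the stated time and space bounds while leaving the working memory in a clean state for subsequent queries.

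First I would fix the working-space object: a single bitvector $B$ of $\sigma$ bits, indexed by the rank $f_\Sigma(\alpha)\in[0,\sigma-1]$ of each alphabet symbol. This uses exactly $\sigma$ bits, matching the claimed space bound. I would maintain the invariant that $B$ is all zero at the start and end of each equality test. The test itself proceeds in two linear scans: in the first scan over $s_1$, for each $i$ compute $c=f_\Sigma(s_1[i])$, abort (as a negative answer) if $B[c]=1$ (this detects a repeated character in $s_1$), otherwise set $B[c]\leftarrow 1$; in the second scan over $s_2$, for each $i$ compute $c=f_\Sigma(s_2[i])$, abort (negative) if $B[c]=0$, otherwise set $B[c]\leftarrow 0$. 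If both scans complete without aborting and $|s_1|=|s_2|$, the answer is positive.

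For correctness, I would argue as follows. After the first scan, either we have aborted (in which case $s_1$ is not a permutation, so the two strings cannot be permutations of the same set), or $B$ has exactly $|s_1|$ ones, precisely on the characters of $s_1$. Under the assumption $|s_1|=|s_2|$, the second scan succeeds if and only if every character of $s_2$ lies in the set of characters of $s_1$ and no character is visited twice (because a second visit would find the bit already cleared). This is exactly the condition that $s_2$ is a permutation of the same underlying set.

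For the time bound and the space-reuse invariant, observe that if the answer is positive every bit set in scan one is subsequently cleared in scan two, so $B$ is again all zero. If the answer is negative, I would perform a short cleanup by re-traversing $s_1$ and resetting $B[f_\Sigma(s_1[i])]\leftarrow 0$ for all $i$, which costs an additional $O(|s_1|)$. Overall the procedure runs in $O(|s_1|)$ time using $\sigma$ bits. The main subtlety, and the only thing worth handling carefully, is precisely this cleanup on the negative branch; without it the invariant on $B$ would be broken across calls. Everything else is a direct reading of the described algorithm.
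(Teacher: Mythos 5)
Your proposal is correct and follows essentially the same route as the paper: a single $\sigma$-bit bitvector, a first scan over $s_1$ that sets bits and detects repeated characters, a second scan over $s_2$ that checks and clears bits, and an explicit $O(|s_1|)$ cleanup pass on the negative branch to restore $B$ to all zeros. The correctness argument and the attention to the reset invariant match the paper's treatment.
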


We now describe the more sophisticated method. We only 
describe how to achieve $O(\sqrt{\sigma})$ space. The generalization 
to $O(\sigma^{1/k})$ space for $k>2$ can easily be deduced from the 
case $k=2$. 

The method works in the following way: we first partition the characters of $s_1$
according to their $\lceil \log\sigma/2\rceil$ most significant bits.  
We also do the same partitioning for the characters of $s_2$. 
Finally, we compare all the pairs of partitions (one from  $s_1$ and one from $s_2$) in which the characters
share the same $\lceil \log\sigma/2\rceil$ most significant bits. 

We now give the details of the implementation. 
We use a table $T_1$ with $2^{\lceil\log\sigma/2\rceil}\leq 2\sqrt{\sigma}$ cells where 
each cell $T_1[i]$ contains a pointer (denoted by $T_1[i].P$) to a list of characters. 
At the beginning we suppose that every $T_1[i].P$ is initialized to null meaning that 
all the lists are empty. We also use a list $L_1$ which stores a list of non-empty cells
(cells with non null pointers) of $T_1$. At the beginning we process the characters of $s_1$ one by one 
and for each character $\alpha_i$ do the following steps:
\begin{enumerate}
\item Compute $j=MSB(f_{\Sigma}(\alpha_i))$, the $\lceil \log\sigma/2\rceil$ most significant  bits of $f_{\Sigma}(\alpha_i)$. 
\item Save in variable $oldP$ the old value of $T_1[j].P$. 
\item Add $\alpha_i$ to the list $T_1[j].P$.
\item If $oldP$ equals null, add $j$ to the list $L_1$. That is, the list $T_1[j].P$
which was previously empty is added to $L_1$ as now it is non-empty. 
\end{enumerate}
At the end of the processing, we do a second step in which we use a second table $T_2$ similar to $T_1$, where each cell $T_2[i]$ 
has a field $Z_{T_2[i].P}$. In this step we process the characters of $s_2$ one by one 
and for each character $\alpha_i$, we add $\alpha_i$ to the list $T_2.P[j]$. 
In the third step, we use two lists $L'_1$ and $L'_2$ initially empty. We take the list $L_1$ and for each element $j$ in the list do the following: 
\begin{enumerate}
\item Add all elements of the list $T_1[j].P$ at the end of the list $L'_1$. 
\item Add all elements of the list $T_2[j].P$ at the end of the list $L'_2$.
\end{enumerate}
At the end of the third step we are left with two lists $L'_1$ and $L'_2$ which are sorted
according to the list $L_1$. That is, in each of the two lists we  have 
first all characters whose $\lceil \log\sigma/2\rceil$ most significant bits are equal to $L_1[0]$ followed 
by all characters whose most significant are equal to $L_1[1]$ etc.
Thus, to finish the equality testing it suffices for every $j$ in the list $L_1$ to do the following:
\begin{enumerate}
\item First advance in $L'_1$ in order to find $R_{j1}$ the longest run of $t_1$ characters in $L'_1$ whose $\lceil \log\sigma/2\rceil$ most significant bits are equal to  $j$. 
\item Similarly, advance in $L'_2$ to identify $R_{j2}$ the longest run of $t_2$ characters in $L'_2$ whose $\lceil \log\sigma/2\rceil$ most significant bits are equal to  $j$.
\item Check that $t_1=t_2$. If this is not the case, immediately declare that $s_1$ is distinct from $s_2$.
\item Otherwise  we check for the equality of the characters in $R_{j1}$ and $R_{j2}$. To this end we already know that they have the same $\lceil \log\sigma/2\rceil$ most significant bits, so that we only need to do equality testing for the $\lfloor \log\sigma/2\rfloor$ least significant bits between characters of 
$R_{j1}$ and $R_{j2}$, which can be done using the procedure of lemma~\ref{lemma:simple_det_equ_test}. This will take time $O(t_1)$ and needs to use just a bitvector of size $2^{\lfloor \log\sigma/2\rfloor}\leq \sqrt{\sigma}$ bits. 
\end{enumerate}
If all the iterations are completed, we immediately deduce that the two sets $s_1$ and $s_2$ are equal. 
Concerning the running time, it is clear that the above procedure runs in time $O(|s_1|)$. Every element of $L'_1$ and $L'_2$ is only traversed twice, the first time for determining the length of the runs and the second time for determining the equality between elements of two runs. Each time an element is traversed, only a constant number of operations are carried on. 

We now analyze the space usage. The total space needed to store the different lists will be upper bounded by $O(|s_1|\log\sigma)$. The table $T_1$ will use space $O(\sqrt{\sigma}\log\sigma)$ bits, while the bitvector $B$ will use space $O(\sqrt{\sigma})$ bits. 

The above algorithm can be easily generalized to use space $(\sigma^{1/k}\log\sigma)$. For that it suffices to do the partitioning of the characters of $s_1$ and $s_2$ in $k-1$ phases. 
The $\log\sigma$ bits of the characters are divided in slices of size about $\log\sigma/k$ bits each. Then in each phase we partition the keys according to a one of the slices starting from the most significant slice to the least significant. After $k-1$ partitioning we will be left with partitions which only differ in their (at most) $\log\sigma/k$ least significant bits. In the final phase, pairs of partitions (one from $s_1$ and one from $s_2$) can easily be matched as was done above using lemma~\ref{lemma:simple_det_equ_test}. 

\begin{lemma}
\label{lemma:query_step3}
Given any two strings $s_1$ and $s_2$ of equal length, testing for the equality of the multisets induced by $s_1$ and $s_2$ can be done: 
\begin{enumerate} 
\item In expected $O(|s_1|)$  time with high probability using only $O(|s_1|\log\sigma)$ bits of space.
\item In worst case $O(k|s_1|)$ time using $(\sigma^{1/k}\log\sigma)$ bits of space. 
\end{enumerate}
\end{lemma}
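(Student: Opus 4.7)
The plan for part (1) is to verify that the randomized scheme described immediately above is correct and meets the claimed bounds. I would argue as follows. Inserting the $|s_1|$ characters of $s_1$ one by one into a dynamic perfect hash table~\cite{DKMHRT88} takes expected $O(|s_1|)$ time with high probability and uses $O(|s_1|\log\sigma)$ bits, since each entry stores one character of $\Sigma$ together with a single marker bit; a duplicate character of $s_1$ is detected at the moment of its insertion, in which case the procedure returns \emph{false}. Subsequently scanning $s_2$ and marking the hash entry of each character visited sets exactly the entries corresponding to the distinct characters appearing in $s_2$. Since $|s_1|=|s_2|$, the multiset equality of $s_1$ and $s_2$ is equivalent to every marker bit being set at the end of the scan, which proves correctness and establishes claim (1).

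For part (2), the plan is to analyze the recursive partitioning procedure and generalise the $k=2$ case outlined above to arbitrary $k$. I would split the $\lceil\log\sigma\rceil$-bit representation of every character into $k$ slices of at most $\lceil\log\sigma/k\rceil$ bits. In phase $i\in\{1,\dots,k-1\}$, I would bucket the current groups according to slice $i$, using a table of $2^{\lceil\log\sigma/k\rceil}\le 2\sigma^{1/k}$ pointers together with a list $L_i$ of touched buckets so that cleanup costs $O(\#\text{nonempty buckets})$ per phase. After $k-1$ phases, corresponding runs from $s_1$ and $s_2$ agree on their first $k-1$ slices and can be compared by Lemma~\ref{lemma:simple_det_equ_test}, which needs only $2^{\lceil\log\sigma/k\rceil}\le 2\sigma^{1/k}$ bits. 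The inductive correctness claim I would formalise is: two multisets over $\Sigma$ are equal iff, at every level, the multisets of bucket sizes agree and the multisets induced inside each pair of matched buckets are equal.

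The time bound $O(k|s_1|)$ follows because each character is touched $O(1)$ times per phase across the $k$ phases, including the final bitvector comparison. The space bound $O(\sigma^{1/k}\log\sigma)$ bits is obtained by observing that the $k-1$ partition tables together with the final bitvector contribute $O(\sigma^{1/k}\log\sigma)$ bits, while the linked lists of characters, the lists of touched buckets, and the output of the partitioning contribute $O(|s_1|\log\sigma)$ bits, which is absorbed into the space required to store the two input strings themselves.

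The main obstacle is the bookkeeping in part (2): one must ensure that the partition tables are reset between queries (and between phases) in time proportional to the number of \emph{touched} buckets rather than to the total table size $2^{\lceil\log\sigma/k\rceil}$, and one must arrange the $k-1$ phases so that after the last partitioning, characters of $s_1$ in a given bucket are matched precisely with the characters of $s_2$ in the corresponding bucket. Both issues are handled by carrying along the auxiliary lists $L_i$ of touched buckets described in the construction, so that every traversal of the tables costs only $O(|s_1|)$ rather than $O(\sigma^{1/k})$.
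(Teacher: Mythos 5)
Your proof follows essentially the same route as the paper: part (1) is the dynamic-perfect-hash-table scheme with duplicate detection on insertion of $s_1$ and marker bits checked after scanning $s_2$, and part (2) is the paper's slice-by-slice partitioning into $2^{\lceil\log\sigma/k\rceil}$ buckets with touched-bucket lists for cheap cleanup and Lemma~\ref{lemma:simple_det_equ_test} applied to the final runs. Your added counting argument for (1) and the explicit note that the $O(|s_1|\log\sigma)$-bit lists are charged to the input are fine refinements of the paper's argument, not a different approach.
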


\subsection{Succinct Trie Representation (top-down trie representation)}
\label{subsec:succtrie}
The third step of a query uses a top-down trie representation which we describe in this section. 
First of all, a trie $Tr$ of size $N$ over an alphabet $\sigma$ can be represented compactly to use 
optimal space $N(\log\sigma+\log_2 e+o(1))$ using the representation described in~\cite{RRS07} 
permitting many navigation operations on the trie in constant time. In particular, a top-down traversal 
of the trie for a string $s$ can be done in time $O(|s|)$ by using $O(1)$ time at each step $i$ of the traversal 
which consists in finding the child labeled with character $s[i]$. 
Given a string $s$, we can determine whether $s\in S$ in time $O(|s|)$, by doing a top-down traversal of the trie. 
Thus, given the set ${\cal F}$ of fingerprints in a trie of size $|{\cal F}|$, we can succinctly encode 
the trie representing the set ${\cal F}$ in time $O(|{\cal F}|)$ so that the trie uses space of
$|{\cal F}|(\log \sigma)(1+\log_2 e+o(1))$ bits. A top-down traversal of the trie will take time $O(1)$ time per 
traversed node. Thus given a fingerprint $f$ in the correct order, we can check whether it is presented in the set~${\cal F}$ 
by doing a top-down traversal of the succinctly encoded trie representing the set ${\cal F}$.

\subsection{Putting Things Together}
\label{subsec:fullquery}
We are now ready to describe the full details of the queries on our data
structures described in the previous subsections.
A query for a fingerprint $f=\{\alpha_1,\alpha_2,\ldots,\alpha_{|f|}\}$ 
is given as a string $s_f$ of characters consisting in the concatenation of the characters 
$\alpha_1,\alpha_2,\ldots,\alpha_{|f|}$. The characters are not necessarily
lexicographically sorted. The query involves the following steps :
\begin{enumerate}
\item Compute the hash value: $$H(f)=\sum_{1\leq i\leq|f|}r^{f_{\Sigma}(\alpha_i)}$$

This operation takes time $O(|f|)$, as it involves only $O(|f|)$ arithmetic
operations. In the following we note $f$ by $f_{|f|}$ and note $H(f_j)$ by $H_j$.
\item Probe the backtracking function using the hash value $H_{|f|}=H(f)$,
retrieving a character $\beta_j$ (actually retrieving $f_{\Sigma}(\beta_j)$ then use the reverse
mapping $f^{-1}_{\Sigma}$ to get $\beta_j$). Then we do $|f|-1$ steps, 
computing for each $j\in[1,|f|-1]$ the hash value
$H_{j-1}=H_j-r^{f_{\Sigma}(\beta_j)}$ and probe the
backtracking function using the hash value $H_{j-1}$ retrieving the character
$\beta_{j-1}$. At the end of the $|f|-1$ steps we will have obtained a
sequence $s'_f=\beta_{|f|},\beta_{|f|-1},\ldots,\beta_1$ of characters. 
%Notice that $f_{j-1}=f_j/\{\beta_j\}$ means that at each step we are virtually
%building a new set $f_{j-1}$ by removing one character from the set
%$f_j$. 
Suppose that $f\in {\cal F}$. When queried with the hash value $H_j$, 
the backtracking function would
return in this case the last character of the fingerprint
representation of $f$. Then $f_{j-1}=f_j/\{\beta_j\}$ would also represent
another fingerprint from ${\cal F}$. More generally we will have $f_j\in {\cal F}$
for every $j\in[1,|f|]$ with $f_j=\{\beta_1,\beta_2,\ldots,\beta_j\}$
\item The third step is to apply the method described in section~\ref{subsec:equal_test} in order to determine whether 
the set of characters in $s'_f$ equals the set of characters in $f$. If the two sets differ, we immediately conclude that
$f\notin {\cal F}$.
\item Finally we do a top-down traversal of the succinctly encoded trie described in section~\ref{subsec:succtrie} for the string $s'_f$. Here if the traversal fails before attaining a leaf, we immediately conclude that $f\notin {\cal F}$, otherwise conclude that $f\in {\cal F}$. 
\end{enumerate}
Now we can more precisely describe what is happening inside the data structure. We have to analyze two cases, the case $f\in {\cal F}$ and the case $f\notin {\cal F}$. For that we first prove the following lemmata:

\begin{lemma}
\label{lemma:query_step2}
Let $f\in {\cal F}$. Then 
\begin{enumerate} 
\item for each $j\in[1,|f|]$, $f_j\in {\cal F}$;
\item the string $s'_f$ is stored in the fingerprint trie. 
\end{enumerate}
\end{lemma}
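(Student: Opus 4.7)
The plan is to prove both parts by downward induction on $j$ from $|f|$ down to $1$, carrying along the invariant that $H_j = H(f_j)$ and that $f_j$ corresponds to a node $q_j$ of the fingerprint trie at depth $j$ whose root-to-$q_j$ path is labelled $\beta_1,\beta_2,\ldots,\beta_j$ (in that top-down order). The key structural facts I will use are: (i) by construction of the fingerprint trie, the set of nodes is in bijection with ${\cal F}$, and the edge connecting a node $q$ to its parent is labelled by the unique character $\beta$ with $f_q = f_{\text{parent}(q)}\cup\{\beta\}$; (ii) by construction of the backtracking function, when queried at the hash value $H(f_q)$ for $f_q\in{\cal F}$, it returns precisely that edge label $\beta$; (iii) the polynomial hash $H$ was explicitly chosen so that it is injective on ${\cal F}$, so the values $H_j$ computed by the query algorithm pick out the intended fingerprints unambiguously.

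For the base case $j=|f|$, the hypothesis $f\in{\cal F}$ gives a node $q_{|f|}$ with $f_{q_{|f|}}=f$. The query computes $H_{|f|}=H(f)$ correctly (step 1 of the algorithm), so the backtracking function returns the label $\beta_{|f|}$ of the edge from the parent of $q_{|f|}$ to $q_{|f|}$. Set $q_{|f|-1}:=\text{parent}(q_{|f|})$; by (i), $f_{q_{|f|-1}}=f\setminus\{\beta_{|f|}\}=f_{|f|-1}\in{\cal F}$, and one checks by the additive structure of the polynomial hash that $H_{|f|-1}=H_{|f|}-r^{f_\Sigma(\beta_{|f|})}=H(f_{|f|-1})$, which re-establishes the invariant for $j-1$.

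For the inductive step, assuming the invariant at level $j\geq 2$, the backtracking function probed at $H_j=H(f_j)$ returns the edge label $\beta_j$ from $\text{parent}(q_j)$ to $q_j$ by (ii), and (i) together with the hash update formula gives $f_{j-1}\in{\cal F}$ and $H_{j-1}=H(f_{j-1})$. Iterating until $j=1$ proves claim (1), and at the same time exhibits the full root-to-$q_{|f|}$ path labelled $\beta_1,\ldots,\beta_{|f|}$ in the trie. For claim (2), observe that $s'_f$ is the reverse reading of this very path (the query records the characters as it climbs from $q_{|f|}$ up to the root), so $s'_f$ is literally the label sequence of a root-to-leaf path in the fingerprint trie, i.e.\ it is stored in the trie.

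The only delicate point — and the step that I expect to be the main source of care rather than difficulty — is ensuring that the backtracking function returns a \emph{meaningful} value at each probe. Because the succinct function representation of Lemma~\ref{lemma:succ_fn_rep} returns an arbitrary value on inputs outside its domain, the argument must invoke the invariant $H_j=H(f_j)$ with $f_j\in{\cal F}$ at every step so that the probe lands inside the domain of the function; the hash injectivity established during construction then guarantees that the returned character is the correct edge label $\beta_j$ rather than an arbitrary symbol. The downward induction described above supplies exactly this invariant, which is why the proof goes through.
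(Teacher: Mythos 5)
Your proof is correct and takes essentially the same route as the paper's: an induction that peels off the last character of the current fingerprint using the trie's parent structure together with the fact that the backtracking function, probed at the (injective) hash value of a fingerprint in ${\cal F}$, returns the label of the edge to its parent. You merely merge the paper's two separate inductions (one per claim) into a single downward induction and make explicit the invariant $H_j=H(f_j)$ and the hash-injectivity point that the paper leaves implicit, so the substance coincides.
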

\begin{proof}
The proof of fact 1 is by induction: $f$ is a valid fingerprint (by assumption) which means that the backtracking function returns the last character $\beta_{j}$ in the trie representation of $f$. Then we know that there exists some $f_{j-1}\in {\cal F}$ such that $f_{j-1}\cup \{\beta_j\}\in{\cal F}$. 
The base case of the induction is for $j=1$ (fingerprint consists of a single character $\beta_1$) in which case we clearly have a child of the fingerprint root labeled with character $\beta_1$.

The proof of fact 2 can also be obtained by induction. Assume that the assertion is true for a fingerprint $f_{j-1}$ of length $j-1$. 
Then it can be proved for a fingerprint $f_j$ of length~$j$, i.e. the assumption says that the sequence $s'_{f_{j-1}}=\beta_1,\beta_1,\ldots,\beta_{j-1}$ 
forms a permutation of $f_{j-1}$. We know that the backtracking function returns a character $\beta_j$ which is the last character 
of the representation of $f_j$ in the fingerprint trie and that there exists a fingerprint $f_{j-1}$ of size $j-1$ such that 
$f_{j-1}\cup \{\beta_j\}\in{\cal F}$. As we know that fact~2 is true for $f_{j-1}$, it means that the sequence 
$s'_{f_{j-1}}=\beta_1,\beta_2,\ldots,\beta_{j-1}$ of distinct symbols is a permutation of $f_{j-1}$.
Hence, by adding the character $\beta_j\notin f_{j-1}$ to the sequence we obtain a permutation of $f_j$.  
\end{proof}

From there we can get the following lemma: 
\begin{lemma}
If $f\in {\cal F}$ then the query successfully detects that $f\in {\cal F}$ and returns a positive answer.
\end{lemma}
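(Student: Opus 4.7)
The plan is to chain the two facts established in Lemma~\ref{lemma:query_step2} through the four steps of the query algorithm, checking that each step reports success. Write $s'_f = \beta_{|f|}, \beta_{|f|-1}, \ldots, \beta_1$ for the sequence produced in Step 2 and $f_j = \{\beta_1, \ldots, \beta_j\}$ for the corresponding prefixes.

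Step 1 is purely arithmetic and cannot reject. For Step 2, I would invoke fact~1 of Lemma~\ref{lemma:query_step2}: each $f_j$ lies in ${\cal F}$, so $H(f_j)$ is one of the $|{\cal F}|$ hash values on which the backtracking function was constructed with pairwise distinct keys, and hence each probe returns the intended last character of the trie representation of $f_j$. Consequently the full sequence $s'_f$ of length $|f|$ is produced without any undefined probe and Step 2 does not fail.

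For Step 3, the set-equality tester of Lemma~\ref{lemma:query_step3} is invoked on $s'_f$ and $f$. By fact~2 of Lemma~\ref{lemma:query_step2}, $s'_f$ is exactly the sequence of edge labels on the root-to-$q_f$ path of the fingerprint trie, and by construction of the trie this sequence contains each character of $f$ exactly once; hence the tester accepts. For Step 4, the top-down descent of the succinct trie guided by $s'_f$ follows the very edges that lead to $q_f$ (again by fact~2 of Lemma~\ref{lemma:query_step2}), so the traversal terminates successfully at $q_f$ and the query returns a positive answer.

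There is essentially no obstacle in this argument: all the subtlety has already been absorbed into Lemma~\ref{lemma:query_step2}, which simultaneously guarantees that $s'_f$ is a permutation of $f$ and that $s'_f$ is a path actually stored in the fingerprint trie. Once those two properties are in hand, neither Step 3 nor Step 4 can reject, so the query necessarily reports $f \in {\cal F}$.
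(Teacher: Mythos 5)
Your proof is correct and follows essentially the same route as the paper: both chain the two facts of Lemma~\ref{lemma:query_step2} through the query steps, using the permutation property to make step~3 accept and the fact that the sequence $s'_f$ corresponds to a valid path of fingerprints in the trie to make step~4 accept. The only cosmetic difference is bookkeeping — you justify step~4 via fact~2 (the string $s'_f$ is stored in the trie) and additionally note via fact~1 that the backtracking probes in step~2 are well-defined, whereas the paper cites fact~2 for step~3 and fact~1 for step~4 — but the decomposition and the key lemma are identical.
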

\begin{proof}
By assumption $f\in {\cal F}$, which means by fact 2 of lemma~\ref{lemma:query_step2} that step 2 returns a sequence $s'_f$ which is a permutation of the set $f$. That means that step 3 will return a positive answer. It remains to be proven that step 4 is also successful. 
%By definition the fingerprint trie stores exactly one sequence of characters for each $f\in{\cal F}$ where the sequence represents a permutation of $f$. 
Moreover by fact 1 of lemma~\ref{lemma:query_step2}, step 4 will also be successful as step 4 traverses the fingerprint trie top-down where at each step it reaches a valid fingerprint $f_j$. 
\end{proof}
\begin{lemma}
Assuming that $f\notin {\cal F}$, either step~3 or step~4 will successfully detect that $f\notin {\cal F}$ and the query returns a negative answer.
\end{lemma}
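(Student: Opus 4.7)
The plan is to argue by contraposition: I would assume that step~3 does \emph{not} detect $f\notin{\cal F}$, and show that step~4 must detect it. If step~3 fails, there is nothing to prove.

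First, I would unpack what ``step~3 passes'' means. By Lemma~\ref{lemma:query_step3}, the set-equality routine returns a positive answer only when the sequence $s'_f=\beta_{|f|}\beta_{|f|-1}\cdots\beta_1$ produced by step~2 consists of pairwise distinct characters whose multiset coincides with the multiset of characters appearing in the query $s_f$. Since $f$ itself is a set, this simply means that $s'_f$ is a permutation (as a string of distinct symbols) of the fingerprint $f$.

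Next, assume for contradiction that step~4 also returns a positive answer. The top-down traversal of the succinctly encoded fingerprint trie (section~\ref{subsec:succtrie}) then follows, character by character, the string $s'_f$, and reaches some node $q$ at depth $|f|$. By the two basic facts about the fingerprint trie recalled at the beginning of section~\ref{space}, every node of the trie corresponds bijectively to a fingerprint of ${\cal F}$, and the fingerprint $f_q$ associated with $q$ equals the set of characters labeling the edges along the root-to-$q$ path. Those characters are exactly the characters of $s'_f$, and by the previous paragraph this set equals $f$. Hence $f=f_q\in{\cal F}$, contradicting the standing hypothesis $f\notin{\cal F}$.

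Consequently, whenever $f\notin{\cal F}$ and step~3 is passed, step~4 must fail, so the query correctly reports a negative answer. I do not anticipate any real obstacle here: the argument is essentially a bookkeeping combination of what the two filters guarantee. The only point worth stating carefully is that the value $\beta_j$ returned by the backtracking function on a hash $H_j$ of a non-existing fingerprint is arbitrary (by Lemma~\ref{lemma:succ_fn_rep}); but this is harmless because step~3 constrains the multiset of the $\beta_j$'s and step~4 constrains their order to realize a genuine path in the trie, and together these two constraints are strong enough to force $f\in{\cal F}$ — the conclusion we have excluded.
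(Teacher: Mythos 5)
Your proof is correct and follows essentially the same route as the paper: assume both step~3 and step~4 pass, note that step~3 forces $s'_f$ to be a permutation of $f$ by distinct characters and that step~4's successful traversal then locates a trie node whose fingerprint is exactly $f$, contradicting $f\notin{\cal F}$. The paper states this more tersely, but your added detail (the node-to-fingerprint correspondence and the remark that the backtracking function's output on a nonexistent fingerprint is arbitrary yet harmless) only makes the same argument explicit.
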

\begin{proof}
The proof is by contradiction. Suppose that step 4 has concluded that $f\in{\cal F}$. Then steps 3 tells us that we have a sequence of $j$ characters $s'_f=\beta_0,\beta_1,\ldots,\beta_{|f|-1}$ which is a permutation of $f$ and that moreover by successfully traversing the trie in step 4 we deduce that $f\in {\cal F}$ which contradicts the premise that $f\notin {\cal F}$.
\end{proof}
Thus, we get the following theorem:
\begin{theorem}
\label{theo:succ_exist_DS}
The set of ${\cal F}$ of fingerprints of a sequence $s=s_1..s_n$ can be represented using a data structure that occupies $|{\cal F}|(2\log\sigma+\log_2 e)(1+o(1))$ bits. Given a set of characters $f$ the data structure is able to determine whether $f\in {\cal F}$ (existential queries) in time $O(|f|)$. 
\end{theorem}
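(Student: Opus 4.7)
The plan is to simply assemble the three ingredients already prepared in Sections~\ref{subsec:backtrack_fn}, \ref{subsec:equal_test}, and~\ref{subsec:succtrie}, and to bound the total space by summing the two representations of the fingerprint trie while noting that the equality-testing scratch space is \emph{working} space, not part of the stored structure. Concretely, I would declare the data structure to consist of (i) the backtracking function of Section~\ref{subsec:backtrack_fn}, built from ${\cal F}$ in expected $O(|{\cal F}|)$ time by choosing a prime $P\in[|{\cal F}|^2\sigma,2|{\cal F}|^2\sigma]$ and a random $r\in[0,P-1]$, computing incrementally the polynomial hash values of all fingerprints during a top-down traversal of the fingerprint trie, and then storing the succinct function of Lemma~\ref{lemma:succ_fn_rep} mapping each hash to the last character of the trie representation of the corresponding fingerprint; and (ii) the succinct trie encoding of Lemma~\ref{lemma:succ_trie_lemma} applied to the fingerprint trie itself.

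For the space bound I would simply add the two contributions: the backtracking function occupies $|{\cal F}|\log\sigma\,(1+o(1))$ bits, and the succinct trie occupies $|{\cal F}|(\log\sigma+\log_2 e)(1+o(1))$ bits, summing to $|{\cal F}|(2\log\sigma+\log_2 e)(1+o(1))$ bits as claimed. I would emphasize that the scratch structures used by the set-equality test in Section~\ref{subsec:equal_test} are \emph{not} stored persistently; they are only allocated at query time and so do not contribute to the structural space bound.

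For the query, I would describe the four-step procedure of Section~\ref{subsec:fullquery} exactly as written: compute $H(f)$ in $O(|f|)$ arithmetic operations, do $|f|$ probes of the backtracking function, each in $O(1)$ time by Lemma~\ref{lemma:succ_fn_rep}, updating the running hash by subtracting a single power of $r$ between probes, then run the set-equality test of Lemma~\ref{lemma:query_step3} between the recovered sequence $s'_f$ and the input $f$, and finally perform an $O(|f|)$-time top-down traversal of the succinct trie guided by $s'_f$ using Lemma~\ref{lemma:succ_trie_lemma}. Each step costs $O(|f|)$, so the total query time is $O(|f|)$. Correctness in both directions is already handled by the two preceding lemmas: if $f\in{\cal F}$ then by Lemma~\ref{lemma:query_step2} the sequence $s'_f$ is both a permutation of $f$ and a valid root-to-node path in the trie, so steps~3 and~4 succeed; and conversely if $f\notin{\cal F}$ then by the lemma immediately preceding the theorem either step~3 or step~4 must fail, so a negative answer is returned.

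The only genuinely delicate point is the randomness in the construction of the backtracking function: one must argue that a suitable $r$ for which the $|{\cal F}|$ hash values are pairwise distinct can be found in expected $O(|{\cal F}|)$ time. This follows from Lemma~\ref{lemma:poly_hash_lemma} applied with $m=|{\cal F}|$, which guarantees injectivity with probability at least $1/2$ per trial, so an expected $O(1)$ trials suffice, each costing $O(|{\cal F}|)$; this is the only step requiring a mild argument beyond quoting the earlier lemmas.
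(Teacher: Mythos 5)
Your proposal matches the paper's own argument: the same three components (backtracking function via Lemma~\ref{lemma:succ_fn_rep} on polynomial hashes, the set-equality test, and the succinct trie of Lemma~\ref{lemma:succ_trie_lemma}), the same space accounting summing to $|{\cal F}|(2\log\sigma+\log_2 e)(1+o(1))$ bits, the same four-step $O(|f|)$ query with correctness delegated to Lemma~\ref{lemma:query_step2} and its companion lemmas, and the same expected-$O(|{\cal F}|)$ construction justified by Lemma~\ref{lemma:poly_hash_lemma}. It is correct and essentially identical to the paper's proof.
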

We can also use the data structure to answer to report queries. However, in this case, because of the need to store pointers to occurrences, 
the representation will no longer be succinct (a pointer needs $\Omega(\log n)$ bits to be represented). We note that for each fingerprint, 
we can just store the list of maximal locations in the sequence using $2\log n$ bits for each element giving a total of $O(|{\cal L}|\log n)$ bits. 
However, a more space efficient approach is to use the suffix tree and for each fingerprint store a list of pointers to named copies in the suffix tree. 
This reduces the space to $O((n+|{\cal L}_C|)\log n)$ bits. Moreover, reporting the locations of the $\occ$ named copies from the suffix tree takes optimal $O(\occ)$ time as it consists in traversing a subtree with at most $\occ$ leaves and $\occ-1$ internal nodes.  
\begin{theorem}
\label{theo:fast_rep_DS}
%Given a sequence $s=s_1..s_n$ of characters we can in time $O(n+|{\cal L}|\log\sigma)$ or $O(n+|{\cal L}_c|\log\sigma)$ build a data structure that occupies either $O(|{\cal L}|\log n)$ or $O((n+|{\cal L}_C|)\log n)$ bits of space such that given a fingerprint $f\in{\cal F}$ the data structure is able to report all the $\occ$ maximal locations in $s$ corresponding to $f$ in time $O(|f|+\occ)$. 
Given a sequence $s=s_1..s_n$ of characters we can in time $O(n+|{\cal L}_c|\log\sigma)$ build a data structure that occupies $O((n+|{\cal L}_C|)\log n)$ bits of space such that given a fingerprint $f\in{\cal F}$ the data structure is able to report all the $\occ$ maximal locations in $s$ corresponding to $f$ in time $O(|f|+\occ)$. 

\end{theorem}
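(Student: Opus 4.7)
The plan is to combine the three structures already built in this paper: the suffix tree $\mbox{ST}(s)$, the participation tree $PT(s)$ with its fingerprint names from Theorems~\ref{theo:build_part_tree} and~\ref{theo:build_names}, and the succinct existential fingerprint data structure of Theorem~\ref{theo:succ_exist_DS}. The bridge from fingerprints to locations is added as follows: for every fingerprint $f\in{\cal F}$ the corresponding node of the fingerprint trie is augmented with a linked list containing one pointer per named copy whose alphabet is $f$. Each such pointer designates the node $u_q\in\mbox{ST}(s)$ whose subtree's leaves enumerate exactly the maximal locations of the copy class $q$.

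Construction proceeds in four stages: (i) build $\mbox{ST}(s)$ in $O(n)$; (ii) derive $PT(s)$ in $O(n+|{\cal L}_C|)$ by Theorem~\ref{theo:build_part_tree}; (iii) run {\sc Name\_fingerprint} on $PT(s)$ in $O(\sigma+|{\cal L}_C|\log\sigma)$ to assign a canonical name to every fingerprint while, as a by-product, recording for each copy class $q$ its suffix-tree node $u_q$; and (iv) feed the named fingerprint set into Theorem~\ref{theo:succ_exist_DS} and attach to each of its trie nodes the list of pointers produced in (iii). Because $\sigma\leq n$ and $|{\cal F}|\leq|{\cal L}_C|$, the total running time is $O(n+|{\cal L}_C|\log\sigma)$. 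For space, $\mbox{ST}(s)$ takes $O(n\log n)$ bits, the union of all lists holds one entry per named copy for a total of $O(|{\cal L}_C|\log n)$ bits, and the remaining structures are of lower order, yielding the announced $O((n+|{\cal L}_C|)\log n)$ bits.

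For a query on $f$, I would first invoke the existential data structure to identify in $O(|f|)$ time the trie node representing $f$, then walk its list of pointers and, for each pointer $u_q$, traverse the subtree of $\mbox{ST}(s)$ rooted at $u_q$, emitting one maximal location per leaf. If copy class $q$ contributes $c_q$ maximal locations, its subtree has $c_q$ leaves and at most $2c_q-1$ nodes altogether, so the enumeration costs $O(\sum_q c_q)=O(\occ)$; since each pointer accounts for at least one output, the list length is also $O(\occ)$, giving $O(|f|+\occ)$ overall.

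The main obstacle, and the real content of stage~(iii), is to guarantee that every copy class $q$ corresponds to a single suffix-tree subtree whose leaves are \emph{exactly} the supports of its maximal locations and nothing more. This is delivered by the machinery already proven: the images $\Phi(z)$ partition ${\cal L}$ by Lemma~\ref{phinonover}, each copy class lies inside a single $\Phi(z)$ by Lemma~\ref{inthesame}, and the supports of the maximal locations in $\Phi(z)$ are read off from the leaves of $PT(s)$ below the endpoint of $z$. It therefore suffices, while running {\sc Name\_fingerprint}, to refine each $\Phi(z)$ into its copy classes by grouping those leaves whose $\mbox{Extend}_s$ produces the same string $\mbox{st}_s(q)$, and then to locate the associated node $u_q$ in $\mbox{ST}(s)$; once this bookkeeping is in place, the enumeration of maximal locations reduces to a standard subtree traversal in $\mbox{ST}(s)$.
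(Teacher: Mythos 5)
Your overall architecture is the same as the paper's: use the existential structure of Theorem~\ref{theo:succ_exist_DS} to identify $f$ in $O(|f|)$ time, attach to each fingerprint a short list of pointers into $\mbox{ST}(s)$, and report by traversing suffix-tree subtrees in $O(\occ)$ time. The genuine problem is the claim your construction rests on, namely that every copy class $q$ owns a node $u_q\in\mbox{ST}(s)$ whose subtree leaves are \emph{exactly} the maximal locations of $q$. This is false in general. Under the natural reading ($u_q$ = locus of $\mbox{st}_s(q)$), the leaves below $u_q$ are \emph{all} occurrences of $\mbox{st}_s(q)$, and an occurrence need not be a maximal location: in $s=abcaba$ the class of $\langle 1,2\rangle$ has string $ab$, but the occurrence of $ab$ at position $4$ is not maximal ($s_6=a$); it belongs to the class of $aba$. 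Your stage~(iii) repair, refining each $\Phi(z)$ into copy classes, does not rescue the claim either: a single $\Phi(z)$ can contain several copy classes (in $s=cababcabc$ the path spelling $ab$ has $\Phi(z)\supseteq\{\langle 2,5\rangle,\langle 7,8\rangle\}$ with strings $abab$ and $ab$), and the supports of one class are just some subset of the leaves below the node ending $z$, determined by left and right context; nothing forces that subset to be the full leaf set of any suffix-tree node, so the pointers $u_q$ you posit need not exist, and pointing every class of $\Phi(z)$ to the same node would produce duplicate reports.

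The fix is to drop the per-copy-class decomposition entirely, which is what the paper's lemmas (and its own terse argument) actually support: for each fingerprint $f$ store one pointer per path $z$ of $PT(s)$ whose label has character set $f$, the pointer going to the suffix-tree node whose leaf set is exactly $\mbox{Suff}(z)$ (such a node exists by the construction of $PT(s)$ from $\mbox{ST}(s)$). The argument in the proof of Lemma~\ref{phinonempty}, applied to each $m\in\mbox{Suff}(z)$, shows every leaf below that node is the support of exactly one maximal location of $\Phi(z)$, all with fingerprint $f$; Lemma~\ref{phinonover} guarantees no location is reported twice across pointers; and Lemma~\ref{inthesame} shows the copy classes refine the $\Phi$-classes, so the total number of pointers over all fingerprints is at most $|{\cal L}_C|$, giving the claimed $O((n+|{\cal L}_C|)\log n)$ bits. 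Since $\mbox{ST}(s)$ has no unary nodes, each traversal costs $O(1)$ per reported leaf, hence $O(|f|+\occ)$ per query; the rest of your time and space accounting is fine. (One point you share with the paper: how a reported support $m$ is converted into the actual interval $\langle k,l\rangle$ in $O(1)$ time is left implicit in both texts, so I only flag it rather than count it against you.)
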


\section{Identifying Fingerprints in Less Space}
\label{less}

The result of theorem~\ref{theo:build_names} names all fingerprints of $s$ 
in time $\Theta(2\sigma+|{\cal L}_{C}|\log\sigma)$ while using
$O((|{\cal L}_{C}|+|{\cal F}|\log\sigma)\log n)$ bits of working space during the building.
The value $|{\cal L}_{C}|$ in the working space can dominate the value $|{\cal F}|\log\sigma$ 
when $|{\cal F}|\ll |{\cal L}_{C}|$.
When we need to build a data structure for report queries, then the value $|{\cal L}_{C}|$ is also
presented in the final size of required space and hence this presence in building space is unavoidable. 
However, when we only need to answer to existential queries, then the final data structure will use space 
of $O(|{\cal F}|\log\sigma)$ bits only. In this case it would be desirable to reduce the construction time as well. 
In this section, we show how to compute the set ${\cal F}$ in time $O(|{\cal L}|\log \sigma)$, 
but using space of $O(|{\cal F}|\log \sigma\log n)$ bits only. 

The original naming algorithm of~\cite{AmirALS03} is convenient for our 
purpose as it does the naming online without the need to carry the list of fingerprint 
changes (which is essentially equivalent to ${\cal L}$) until the end of the construction. 
The complexity of the algorithm of~\cite{AmirALS03} is $O(n\sigma\log n\log\sigma)$. 
The $\log n$ factor comes from the complexity of the use of binary search tree which is responsible
for the following task: given a pair of names $(\subname_0,\subname_1)$ at level $i$, find whether there is 
a unique name $\upname$ at level $i+1$ associated with the pair and 
if not add a new unique name $\upname$, associate it with the pair 
$(\subname_0,\subname_1)$ and add it to the binary search tree. 
This complexity of the naming algorithm was improved in~\cite{Ours2006,KR08b} 
from $O(n\sigma\log n\log\sigma)$ to just $O(|{\cal L}|\log \sigma)$ by the following way.
\begin{enumerate}
\item Notice that the naming has to deal only with $|{\cal L}|$ fingerprint changes instead of $n\sigma$.
 This reduces the factor $n\sigma$ to $|{\cal L}|$. 
\item Deferring the naming process until all the fingerprint changes have been recorded. Then using radix sort, the process 
time of giving unique names at level $i+1$ to pairs of names from level $i$ is
reduced to constant time per pair. This dispenses from the use of the binary search tree and reduces the factor $\log n$ to just~$1$. 
\end{enumerate}
This is the approach used in theorem~\ref{theo:build_names} and described in section~\ref{newnaming}. 

Our approach to improve~\cite{AmirALS03} is to notice that the binary search tree can be replaced 
with any hash table implementation which will change the time per operation from worst-case $O(\log n)$ 
to randomized expected $O(1)$. By this change the query time reduces to expected $O({\cal L}\log \sigma)$, but contrary to 
theorem~\ref{theo:build_names}, the building space remains as small as in~\cite{AmirALS03}, as we do not need
to record the fingerprint changes during the building process. More precisely during the naming process we need only 
to maintain at most $|{\cal F}|\log\sigma$ names (each fingerprint  might incur at most $\log\sigma$ names, one name at each level), 
which have been attributed so far. These names are recorded in a hash table which will use $O(|{\cal F}|\log\sigma\log n)$ bits of space. 
 
Thus, we have proven the following theorem:

\begin{theorem}
\label{theo:rand_build}
The set ${\cal F}$ of fingerprints of a sequence $s=s_1..s_n$ can be
computed in expected time $O(n+|{\cal L}|\log\sigma)$ time using $O((n+|{\cal F}|\log\sigma)\log n)$
bits of working space. 
\end{theorem}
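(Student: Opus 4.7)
The plan is to revisit the original online naming algorithm of \cite{AmirALS03} and substitute its binary search tree lookup with a dynamic hash table, then carefully account for the time and the peak working space.

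First I would recall the structure of the online naming procedure. As we scan the input from left to right, each position can trigger up to $\sigma$ fingerprint changes (one per maximal location ending or starting there), and each change propagates upwards through the $\log\sigma$ levels of the naming pyramid described in Section~\ref{tools}. At every level $i$ the algorithm must, given a pair $(\subname_0,\subname_1)$ of names coming from level $i$, decide whether this pair has already been assigned a name $\upname$ at level $i+1$; if yes it retrieves it, otherwise it allocates a fresh name and stores the association $(\subname_0,\subname_1)\mapsto \upname$. In \cite{AmirALS03} this association is stored in a balanced binary search tree, producing the $\log n$ factor per operation; in the improvement of \cite{KR08b}, the entire list of changes is deferred and radix-sorted, which removes the $\log n$ factor but forces the whole list (of size $\Theta(|{\cal L}|)$) to be kept in memory.

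The main step is to replace the BST at each level by a dynamic hash table keyed on the pair $(\subname_0,\subname_1)$ and supporting insertion and lookup in expected $O(1)$ time (using, e.g., dynamic perfect hashing). This preserves the online character of \cite{AmirALS03}, so we never materialize the list of $|{\cal L}|$ changes; only the already-issued names are kept. For the running time, the preprocessing (building the suffix tree and its auxiliary structures used to drive the naming, already available in $O(n)$) contributes the $n$ summand; each of the $|{\cal L}|$ fingerprint changes provokes a cascade of at most $\log\sigma$ hash operations, each expected $O(1)$, giving the claimed $O(n+|{\cal L}|\log\sigma)$ expected time.

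For the space bound, the key observation I would stress is that the total number of distinct names ever stored, summed over all $\log\sigma$ levels, is $O(|{\cal F}|\log\sigma)$: every name corresponds to a subset of some fingerprint induced by cutting the fingerprint pyramid at one level, and there are at most $|{\cal F}|$ such contributions per level. Each name fits in $O(\log n)$ bits, so the hash tables together use $O(|{\cal F}|\log\sigma\log n)$ bits; adding $O(n\log n)$ bits for the input and the auxiliary suffix-tree-like structures yields the stated $O((n+|{\cal F}|\log\sigma)\log n)$-bit bound. The one subtlety I would watch for is the amortized-versus-peak space of the dynamic hash tables (rebuilds must be done in place with only a constant-factor overhead), and the need to argue that no transient buffer forces us to hold the full list of $|{\cal L}|$ changes; both follow because the algorithm is strictly online and each change is consumed (turned into at most $\log\sigma$ hash operations) before the next one is generated.
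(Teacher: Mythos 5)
Your proposal is correct and follows essentially the same route as the paper: keep the online naming of \cite{AmirALS03} restricted to the $|{\cal L}|$ fingerprint changes, replace the binary search tree by a dynamic hash table with expected $O(1)$ operations, and observe that only the at most $|{\cal F}|\log\sigma$ names issued so far (one per fingerprint per level, $O(\log n)$ bits each) need to be retained, giving $O(n+|{\cal L}|\log\sigma)$ expected time and $O((n+|{\cal F}|\log\sigma)\log n)$ bits of working space. Your extra remarks on peak versus amortized hash-table space only make explicit what the paper leaves implicit.
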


%\begin{itemize}
%\item We 
%\end{itemize}
%For each line column $i$ of the
%text (we have $n$ columns) we will store only two informations :
%$%\lastchar_i$ and $\currname_i$.  When we add a character $c$ to a
%column $i$, we update the $\currname_i$ in the following way : let
%$c[j]$ for $0\leq j\leq\log \sigma$ be the bit of character $c$ of
%rank $j$, where $c[0]$ is the bit of highest weight. We generate the
%sequence $temp_name[0]=\currname_i$.
%\begin{theorem}
%The set ${\cal F}$ of fingerprints of a sequence $s=s_1..s_n$ can be computed in time $O(|{\cal L}|\log\sigma)$ using $O(|{\cal F}|\log\sigma)$ words of working space. 
%\end{theorem}

\section{Randomized Identification Using a Monte Carlo Algorithm}
\label{random}

We now briefly sketch our construction algorithm that constructs the set of fingerprints ${\cal F}$ of 
the sequence $s$,  using only $O(|{\cal F}|\log n)$ bits ($O(|{\cal F}|)$ words) 
of temporary space and running in time $O(|{\cal L}|)$. While this approach might fail with an extremely small probability (the 
approach is said to be Monte Carlo or MC for short),  it might still be useful in case one wishes to get approximate statistics 
 on fingerprints: counting the total number of distinct fingerprints, or counting 
 the total number of strings having a given fingerprint, etc. 
%Moreover, as we  will see in next subsection, this approach permits to give a data structure which 
% answers exactly (with no error) to existential queries (problem 2) in time $O(|f|)$ with  
% high probability using a data structure which uses $O(|{\cal F}|cw+|{\cal L}|\log\sigma)$ bits of space.

To name the fingerprints we use use hash values of size $\Theta(\log n)$ bits. 
The hash values are computed using polynomial hash functions as described 
 in section~\ref{subsec:poly_hash}. 

Like in the previous section, the naming will be done online: we do not need not to 
store the fingerprint changes during the naming process. 
 Unlike the method described in the previous section, the fingerprint names will
 not be assigned deterministically, but will instead be assigned using
 hash values which could collide with extremely small probability.
 More specifically, in order to identify the existence of a fingerprint
we will use the polynomial hash functions as described in section ~\ref{subsec:poly_hash}
on the whole fingerprint. The polynomial
 hash function will be computed modulo $P$, where $P$ is a prime
 selected such that $P>n^cn^2\sigma^3)$. The chosen value of $P$ will ensure that each
 fingerprint will be mapped to a distinct value with probability at
 least $n^{-c}$. This can easily be seen: we have $|{\cal F}|<n\sigma$
 which implies that $|{\cal F}|^2<n^2\sigma^2$. Given that the polynomials
 are of degree at most $\sigma$, we can deduce that the probability of
 collision is at most
 $\frac{|{\cal F}|^2\sigma}{2P}<\frac{n^2\sigma^3}{n^cn^2\sigma^3}=n^{-c}$.
 \\ We now describe our algorithm in more detail. We assume that a set ${\cal S}$ of fingerprints can be
represented as a list $L = (\alpha_1,\alpha_2, \ldots \alpha_p)$ of
distinct characters such that $S= \{f_1,f_2,\ldots,f_p \} \mbox{ where
} f_i = \cup_{1 \leq j \leq i} \{ \alpha_j \}.$ 
We randomly choose a number $r\in [0,P]$ and the random hash function 
$H_r$ will be such that: $$H_r(f_i)=\sum_{1 \leq j \leq i}(r^{f_{\Sigma}(\alpha_j)})$$
The number $H_r(f_i)$ will be the unique name associated with the fingerprint $f_i$.
Now observe that $H_r(f_i)=H_r(f_{i-1})+r^{f_{\Sigma}(\alpha_i)}$. 
Thus computing the label
of $f_i$ can be done online using constant number of arithmetic operations based on $\alpha_i$ and $H_r(f_{i-1})$. 
In order to maintain the set of already processed fingerprints, we use a dynamic
hash table (for example using the MC real time dynamic hashing method
 described in~\cite{DH90}) that records the names of already processed fingerprints. Each time we generate the name 
of the fingerprint associated with a given maximal location we probe the dynamic hash table to see if that name 
already exists and if not add it to the hash table. If we also need to maintain the set of maximal locations
along with the set of fingerprints, we just associate a list of maximal locations to each fingerprint and store that 
list as  satellite data associated to the fingerprint name stored in the hash table. When the name of the fingerprint 
associated to a maximal location already exists in the hash table, this maximal location is added to the list of maximal 
locations associated with the fingerprint name in the hash table. If the fingerprint name did not already exist in the hash table, we add the name 
to hash table and associate a list of maximal locations which contains only the maximal location corresponding 
to the newly added fingerprint. 

In conclusion, we have proven the following theorem:

\begin{theorem}
\label{theo:MC}
The set ${\cal F}$ of fingerprints of a sequence $s=s_1..s_n$ can be
probabilistically computed in time $O(n+|{\cal L}|)$ using $O((n+|{\cal F}|)\log n)$
bits of working space. Moreover the set of maximal locations ${\cal
L}$ can be probabilistically determined in time $O(n+|{\cal L}|)$
using $O((n+|{\cal L}|)\log n)$ bits of working space.
The error rate probability can be made to $O(n^{-c})$ for any constant $c$. 
\end{theorem}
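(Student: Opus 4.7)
\smallskip
\noindent\textbf{Proof plan for Theorem~\ref{theo:MC}.}
The plan is to combine three ingredients already set up in the paper: (i) an online enumeration of the $|\mathcal{L}|$ fingerprint changes in the style of Amir et al.~\cite{AmirALS03} and of Section~\ref{less}; (ii) the polynomial hash family $H_P$ of Section~\ref{subsec:poly_hash} to attach a $\Theta(\log n)$-bit identifier to every fingerprint on the fly; and (iii) a real-time dynamic hash table (e.g.\ \cite{DH90}) storing at most $|\mathcal{F}|$ identifiers, into which we insert/query once per fingerprint change.

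\smallskip
\noindent
First I would describe the enumeration phase. Using the algorithm of \cite{AmirALS03} (as refined in Section~\ref{less}), we traverse the positions of $s$ and, for each maximal location that is about to be created or extended, emit a pair $(\subname,\alpha)$ where $\subname$ is the name of the currently held fingerprint and $\alpha$ the character being added. Crucially this enumeration never materialises the whole list $\mathcal{L}$: each change is consumed as it is produced, so the transient space needed for the enumeration itself is $O(n\log n)$ bits. This yields the $O(n+|\mathcal{L}|)$ running-time bound provided that processing one change takes $O(1)$ time.

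\smallskip
\noindent
Next I would show how to name fingerprints in $O(1)$ time per change. We pick a prime $P$ with $P \ge n^{c+2}\sigma^3$ (found by the randomised routine sketched in Section~\ref{subsec:poly_hash} in polylogarithmic time and negligible space), a random seed $r\in[1,P-1]$, and set $H_r(f)=\sum_{\alpha\in f} r^{f_\Sigma(\alpha)}\bmod P$. To obtain $r^{f_\Sigma(\alpha)}$ in $O(1)$ time we precompute, once for all, the two-dimensional table of Section~\ref{subsec:poly_hash} with $c=2$, which uses $O(\sqrt{\sigma}\log n)=O(n\log n)$ bits and is absorbed into the $O((n+|\mathcal{F}|)\log n)$ budget. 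Then when the enumeration emits $(\subname,\alpha)$, writing $f'=f\cup\{\alpha\}$, the identity $H_r(f')=H_r(f)+r^{f_\Sigma(\alpha)}\bmod P$ lets us compute the new identifier in $O(1)$ operations, probe the dynamic hash table for it, and insert it if absent. Amortised over $|\mathcal{L}|$ changes this costs expected $O(|\mathcal{L}|)$ time and the table ends up storing at most $|\mathcal{F}|$ identifiers of $O(\log n)$ bits each, i.e.\ $O(|\mathcal{F}|\log n)$ bits. For the second part of the theorem, we additionally attach to each hash-table entry a growing list of the corresponding maximal locations; since $\sum_{f\in\mathcal{F}}|{\rm locations}(f)|=|\mathcal{L}|$, this costs $O(|\mathcal{L}|\log n)$ extra bits and no extra time beyond an $O(1)$ list append per change.

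\smallskip
\noindent
Finally I would bound the error probability by the union bound argument of Lemma~\ref{lemma:poly_hash_lemma}. Two distinct fingerprints $x,y\in\mathcal{F}$ collide under $H_r$ iff the polynomial $H_r(x)-H_r(y)$, of degree $<\sigma$ over $GF[P]$, vanishes at $r$; this happens for at most $\sigma$ choices of $r$ out of $P-1$. Since $|\mathcal{F}|\le n\sigma$, there are fewer than $n^2\sigma^2/2$ pairs, so the total failure probability is at most $n^2\sigma^3/(2P)\le n^{-c}$. Any $c$ is achievable by enlarging $P$ by a constant factor, which affects only constants in the running time. The main technical obstacle, in my view, is not the naming analysis itself but making sure that the enumeration of fingerprint changes can indeed be driven in $O(1)$ amortised time per change while carrying only the \emph{current} hash name of each active fingerprint rather than the whole history; this is exactly what the online variant of the Amir et al.\ framework provides, and the correctness of reading off $\mathcal{F}$ from the final contents of the dynamic hash table follows from the fact that, conditioned on no collision, the names stored are in bijection with $\mathcal{F}$.
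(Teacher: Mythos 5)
Your proposal is correct and takes essentially the same route as the paper's own proof: an online enumeration of fingerprint changes, incremental polynomial (Karp--Rabin-style) hashing modulo a prime $P \ge n^{c+2}\sigma^3$ to assign $\Theta(\log n)$-bit names via $H_r(f\cup\{\alpha\})=H_r(f)+r^{f_{\Sigma}(\alpha)}$, a dynamic hash table of \cite{DH90} holding the at most $|{\cal F}|$ names (with satellite lists of maximal locations for the second claim), and the same root-counting/union-bound argument bounding the collision probability by $O(n^{-c})$. The only cosmetic difference is your explicit use of the two-slice power table to evaluate $r^{f_{\Sigma}(\alpha)}$ in $O(1)$ time, a detail the paper leaves implicit.
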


%\section{Conclusion}

%\nocite{UY00}
\nocite{AKOF99b}
\nocite{BCMR05}
\bibliographystyle{plain}
{\small

}

\end{document}